\newtheoremstyle{mystyle}
{}
{}
{}
{}
{\itshape}
{.}
{ }
{}
\newtheoremstyle{mystyle2}
{}
{}
{\itshape}
{}
{\scshape}
{.}
{ }
{}
\newtheorem{mydef}{Definition}
\theoremstyle{mystyle}
\newtheorem{lemma}{Lemma}
\newtheorem{definition}{Definition}
\newtheorem{theorem}{Theorem}
\newcommand{\review}[1]{\begingroup\sethlcolor{yellow}\hl{#1}\endgroup}
\renewcommand{\review}[1]{#1}
\tikzset{router/.style={circle,draw,fill=gray!60,inner sep=0pt,minimum size=5pt}}
\tikzset{desc/.style={font = \footnotesize}}
\newcommand{\makeNoCSectional}[2]{
\foreach \x in {0,...,3}{
	\node [router]  (\x1) at (2*\x,1) {};}
\foreach \x in {0,...,6}
{\pgfmathtruncatemacro{\label}{\x}
	\node [router]  (\x2) at (\x,0) {};}

\foreach \x in {0,...,5}{
	\pgfmathtruncatemacro{\y}{\x+1}
	\draw (\x2) -- (\y2);} 
\foreach \x in {0,...,2}{
	\pgfmathtruncatemacro{\xright}{\x+1}
	\draw (\x1) -- (\xright1);}
\foreach \x in {0,...,3}{
	\pgfmathtruncatemacro{\xbelow}{2*\x}
	\draw (\x1) -- (\xbelow2);}

\node[desc, left = 0.1 of 01, align=right, anchor = east] (layer1) {#1};
\node[desc, left = 0.1 of 02, align=right, anchor = east] (layer2) {#2};
}
\definecolor{col1}{RGB}{27,158,119}
\definecolor{col2}{RGB}{217,95,2}
\definecolor{col3}{RGB}{117,112,179}
\tikzset{
 hexagon/.style={signal,signal to=east and west}
}
\tikzset{
 octagon/.style={shape=regular polygon, regular polygon sides=8, draw, minimum width=.8in}
}
\DeclareMathAlphabet{\mymathbb}{U}{BOONDOX-ds}{m}{n} 
\newcommand{\zero}{\mymathbb{0}} 
\newcommand{\N}{\mathbb{N}}
\newcommand{\R}{\mathbb{R}}
\DeclareMathOperator*{\argmin}{arg\,min}
\newcommand{\norm}[1]{\left\lVert#1\right\rVert}
\newcommand{\ZXYZ}{Z\textsuperscript{+}(XY)Z\textsuperscript{-}}
\newcommand{\AlgorithmI}{Z\textsuperscript{+}(XY)Z\textsuperscript{-}}
\newcommand{\AlgorithmII}{ZXYZ}
\newcommand{\XYZ}{conventional XYZ}
\newcommand{\north}{\operatorname{north}}
\newcommand{\east}{\operatorname{east}}
\newcommand{\south}{\operatorname{south}}
\newcommand{\west}{\operatorname{west}}
\newcommand{\up}{\operatorname{up}}
\newcommand{\down}{\operatorname{down}}
\newcommand{\fnorth}{>>$\north$<<}
\newcommand{\feast}{>>$\east$<<}
\newcommand{\fsouth}{>>$\south$<<}
\newcommand{\fwest}{>>$\west$<<}
\newcommand{\fup}{>>$\up$<<}
\newcommand{\fdown}{>>$\down$<<}
\newcommand{\teast}{\emph{east}}
\newcommand{\twest}{\emph{west}}
\newcommand{\tup}{\emph{up}}
\newcommand{\tdown}{\emph{down}}
\newcommand{\clk}{\text{clk}}
\lstdefinestyle{routingScript}{
	emph={%
	if, else, then, route to
	},
    emphstyle={\bfseries},
	morekeywords={if, else, then, route, to, end},
	escapeinside={(*}{*)},
	numbers=left,
	stepnumber=2,
	numbersep=10pt,
	tabsize=2,
	showspaces=false,
	showstringspaces=false
}
\def\BibTeX{{\rm B\kern-.05em{\sc i\kern-.025em b}\kern-.08em
		T\kern-.1667em\lower.7ex\hbox{E}\kern-.125emX}}
\begin{document}
\bstctlcite{IEEEexample:BSTcontrol}

\title{NoCs in Heterogeneous 3D SoCs: Co-Design of Routing Strategies and Microarchitectures}


\author{
	\IEEEauthorblockN{{Jan Moritz Joseph}\IEEEauthorrefmark{1},{ Lennart Bamberg}\IEEEauthorrefmark{2}, {Dominik Ermel}\IEEEauthorrefmark{1}, {Behnam Razi Perjikolaei}\IEEEauthorrefmark{2}, {Anna Drewes}\IEEEauthorrefmark{1},  {Alberto Garc\'ia-Oritz}\IEEEauthorrefmark{2}, {Thilo Pionteck}\IEEEauthorrefmark{1}}
	\IEEEauthorblockA{\IEEEauthorrefmark{1}Otto-von-Guericke-Universit\"at Magdeburg\\
			Institut f\"ur Informations- und Kommunikationstechnik, 39106 Magdeburg, Germany\\
			Email: \{jan.joseph, dominik.ermel, anna.drewes, thilo.pionteck\}@ovgu.de}\\
		\IEEEauthorblockA{\IEEEauthorrefmark{2}University of Bremen\\
			Institute of Electrodynamics and Microelectronics, 28359 Bremen, Germany\\
			Email: \{agarcia, bamberg, raziperj\}@item.uni-bremen.de}}




\maketitle
\begin{abstract}
	Heterogeneous 3D System-on-Chips (3D SoCs) are the most promising design paradigm to combine sensing and computing within a single chip. A special characteristic of communication networks in heterogeneous 3D SoCs is the varying latency and throughput in each layer. As shown in this work, this variance drastically degrades the network performance. We contribute a co-design of routing algorithms and router microarchitecture that allows to overcome these performance limitations. We analyze the challenges of heterogeneity: Technology-aware models are proposed for communication and thereby identify layers in which packets are transmitted slower. The communication models are precise for latency and throughput under zero load. The technology model has an area error and a timing error of less than 7.4\% for various commercial technologies from 90 to \unit[28]{nm}. Second, we demonstrate how to overcome limitations of heterogeneity by proposing two novel routing algorithms called \AlgorithmI\ and  \AlgorithmII\ that enhance latency by up to 6.5$\times$ compared to conventional dimension order routing. Furthermore, we propose a high vertical-throughput router microarchitecture that is adjusted to the routing algorithms and that fully overcomes the limitations of slower layers. We achieve an increased throughput of 2 to 4$\times$ compared to a conventional router. Thereby, the dynamic power of routers is reduced by up to 41.1\% and we achieve improved flit latency of up to 2.26$\times$ at small total router area costs between 2.1\% and 10.4\% for realistic technologies and application scenarios.


\end{abstract}

\begin{IEEEkeywords}
	3D integrated circuits, Network on chip, heterogeneous integration, monolithic stacking
\end{IEEEkeywords}

\noindent \textbf{Accepted for publication in IEEE Access on Sept 10, 2019.}

\section{Introduction}

3D integration is one of the most promising paradigms to meet the perpetual demand for chips with higher performance, less power consumption and reduced area \cite{Dong.2009}. Therefore, many designs and architectures have been proposed: 3D-integrated DRAM subsystems, 3D-FPGAs \cite{Pavlidis.2010, Chaware.2015}, and even 3D-Vision Systems-on-Chip (3D VSoC) with stacked sensors \cite{Zarandy.2011}. Recently, Intel introduced "Lakefield", in which Foveros 3D technology is used to stack multicore processors, FPGAs and DRAM \cite{Intel.2019}. Other manufactures such as Xilinx are also targeting 3D integration \cite{Wu.2015}. Ultimately, stacking dies even tackles fundamental limits of computation by asymptotically reducing computation time from $t$ to $t^{0.75}$~\cite{Markov.2014}. All these works impressively demonstrate the advantages of 3D integration that are even exploited in commercial applications.


Despite the aforementioned incremental advancements, 3D integration enables one game-changing key innovation: It allows for heterogeneous integration, in which dies in disparate technologies, i.e.\ analog, mixed-signal, memory and logic are stacked. As stated in \cite{Lee.}, this is "the ultimate goal of 3D integration" because it allows to align the requirements of components with the technology characteristics of their die. This is advantageous for applications, in which components with different requirements are integrated to a single SoC: \cite{Garrou.2009} introduces an architecture for Internet of Things (IoT) stacking wireless sensors, RF communication, data processing and energy scavenging. In high-performance processors, interleaving of dedicated dies with either memory or processing increases performance \cite{Yu.2013}, with exemplary designs \cite{Sun.2013, Kikuchi.2011, Abe.2008, Kim.2015}. Especially, vision applications can profit of heterogeneity: 3D VSoCs \cite{Zarandy.2011} combining image sensing, mixed-signal conversion and digital image processing. Thus, VSoCs demand heterogeneous integration, intrinsically. In recent research, a SoC is proposed for self-driving cars that realizes up to 10,000 frame per second 
\cite{Koyanagi.2019}. Going further, the mixed-signal layers can implement analog accelerators, for instance to calculate a cellular neural network \cite{Kim.2009}. Such accelerators have been implemented in \unit[180]{nm}~\cite{Jia.2017} and \unit[130]{nm} \cite{Ghaderi.2015} CMOS technology. To summarize, heterogeneous integration enables to build novel and more efficient systems in previously challenging application areas.

To unleash the full potential of 3D integration, the used interconnection architectures must offer very good PPA (power, performance, area). In general, there are two approaches to distribute interconnects in the third dimension: First, the components of the interconnect architecture can be distributed in 3D. For instance, \cite{Dubois.2013} enables packet transmission in vertically partially connected 3D NoCs using elevator-first routing. \cite{Kim.2015} presents a NoC that connects cores for a neural network using TSVs. Also, works on inductive coupling have been made \cite{Miura.2013}. Second, the components of the interconnect itself can be split-up over layers and be distributed. For instance, MIRA \cite{Park.2008} is such a 3D stacked router that achieves up to 51\% latency improvement for synthetic workloads. While all these works are well-suited for interconnects on homogeneous 3D integration, they do not explicitly account for varying integration properties within the interconnect from heterogeneous 3D integration.

Since the integration properties of any interconnect will differ if it spans multiple heterogeneous layers of a chip, heterogeneous 3D interconnects must account for this property. There are two main integration issues as illustratively shown in Fig.~\ref{fig:challenges}: First, \emph{the components of an interconnection architecture are not purely synchronous}, since logic in digital nodes can be clocked faster than in mixed-signal nodes; the clock deviation can be by a magnitude and not only a small deviation. Second, \emph{the feature size of (identical) components differs with technology}. Traditional router architectures cannot be applied, because these cannot cope with different clock speeds or yield unbearable costs in mixed-signal layers. This will be discussed in Sec.~\ref{sec:related} in detail. \review{Because of the aforementioned two arguments, novel models, architectures and concepts are required:} For instance,  \cite{Bamberg.2018} proposes TSV power models that account for heterogeneity and low-power coding with less than 1\% error compared to bit-level accurate simulations. In another example, in ref. \cite{Joseph.2017} input buffer distributions among layers are evaluated with area saving between 8.3\% and 28\% and power savings between 5.4\% and 15\%. More significant improvements are required, which also improve performance. In particular, latency and throughput in heterogeneous 3D interconnects vary per layer due to different clocks and router count. These severe effects of heterogeneity were, previously to this work, unconsidered for heterogeneous 3D interconnects. 

\begin{figure}
	\centering
%
%
\begin{tikzpicture}

\node[desc, align = right, anchor = east] (dn) at (0.5,1) {digital nodes};
\node[desc, align = right, anchor = east] (msn) at (0.5,0) {mixed-signal nodes};
\node[desc, align = right, anchor = east] (issues) at (0.5,-1) {\textbf{integration issues:}\\ \textbf{components are\dots}};
\node[desc, align = center, anchor = center] (clock) at (2,2) {\textbf{clock speed}};
\node[desc, align = center, anchor = center] (fs) at (4.375,2) {\textbf{feature size}};
\node[desc, align = center, anchor = center] (issues1) at (2,-1) {\textbf{not purely}\\ \textbf{synchronous}};
\node[desc, align = center, anchor = center] (issues2) at (4.395,-1) {\textbf{varying in size}\\ \textbf{and number}};

\draw[thick] (1,0.75) -- ++ (0.25, 0) -- ++ (0, 0.5) -- ++ (0.25, 0) -- ++ (0, -0.5) -- ++ (0.25, 0) -- ++ (0, 0.5) -- ++ (0.25, 0) -- ++ (0, -0.5) -- ++ (0.25, 0) -- ++ (0, 0.5) -- ++ (0.25, 0) -- ++ (0, -0.5) -- ++ (0.25, 0) -- ++ (0, 0.5) -- ++ (0.25, 0) -- ++ (0, -0.5);
\draw[thick] (1,-.25) -- ++ (0.5, 0) -- ++ (0, 0.5) -- ++ (0.5, 0) -- ++ (0, -0.5) -- ++ (0.5, 0) -- ++ (0, 0.5) -- ++ (0.5, 0) -- ++ (0, -0.5);

\draw[thick, col1] (4.5-0.25,1.2+0.125) -- (4.5,1.2+0.125);
\draw[thick, col1] (4.5-0.25,0.7+0.125) -- (4.5,0.7+0.125);

\draw[thick, col1] (4.5-0.5+0.125,1.2+0.125) -- (4.5-0.5+0.125,0.7+0.125);
\draw[thick, col1] (4.5+0.125,1.2+0.125) -- (4.5+0.125,0.7+0.125);

\foreach \x in {4.5-0.5,4.5}{
	\foreach \y in {0.7+0.5,0.7}{
	\draw[thick, col1, fill = col1!10] (\x, \y) rectangle ++(0.25,0.25);
}
}
\draw[thick, col1] (4.5-0.75,0) -- (4.5,0);
\foreach \x in {4.5-0.75, 4.5}{
		\draw[thick, col1, fill = col1!10] (\x, -.25) rectangle ++(0.5,0.5);
}

\draw (.75,2.3) -- (.75,-1.3);
\draw (3.4,2.3) -- (3.4,-1.3);

\draw(-2.2, 1.7) -- +(7.5,0);
\draw(-2.2, 0.5) -- +(7.5,0);
\draw(-2.2, -0.5) -- +(7.5,0);

\end{tikzpicture}
	\caption{Challenges for heterogeneous interconnection architectures.}
	\label{fig:challenges}
\end{figure}

The aforementioned influence of heterogeneity on interconnects requires a novel approach that \emph{simultaneously} considers routing strategies and architectures; in a separate design, the full potential of the interconnect cannot be unleashed since either throughput or latency are limited. Therefore, this paper provides the following specific contributions:
\begin{enumerate}[topsep=0pt, label=\textit{Contribution \arabic*:}, leftmargin=\widthof{\textit{Contribution 1}}+\parindent]
	\item We introduce models for network throughput and latency; and we thereby show that heterogeneity drastically degrades network performance.
	\item We contribute two new principles for routing and two concrete routing algorithms reducing latency. The algorithms exploit the variation in communication speed between layers.
	\item \review{We propose a novel co-designed router and routing strategies} that tackles throughput, which is limited by the slowest router in a packet's path. The router architecture and the used simulation tools to generate results are published open-source.
\end{enumerate}
By these contributions, we tackle throughput and latency limitations in heterogeneous interconnects by an integrated approach. The source code of the simulation tool and the router architecture are available at \cite{Joseph.Github}.

The work is structured as follows: 
We discuss limitations of related approaches for heterogeneous 3D SoCs (Sec.~\ref{sec:related}).
To quantify the effects of heterogeneity, we propose a model for technology (Sec.~\ref{sec:model}) and communication (Sec.~\ref{sec:model:comm}). We thereby show a drastically negative impact on the network performance due to slower packet provision (Sec.~\ref{sec:potentials}). Based on these findings, we contribute two novel routing algorithms to overcome this issue by improving the latency (Sec.~\ref{sec:routing}). Further, we propose a router architecture adopted to these routing algorithms, which fully nullifies the negative influence of heterogeneity and improves network throughput (Sec.~\ref{sec:architectures}). Finally, we present the accuracy of our models and that latency, throughput and dynamic power are improved at minor hardware costs (Sec.~\ref{sec:results}). \review{In this section, we also present a comprehensive case study for a realistic system that demonstrates positive effects of our approach under practical conditions including congestion.} We thereby discuss \review{in Sec.}~\ref{sec:discussion} all relevant aspects of routing in NoCs for heterogeneous 3D SoCs and contribute that, solely, a co-design of algorithms and architectures allows for efficient heterogeneous 3D interconnects.

\section{Related work}\label{sec:related}

As already stated in the introduction, we discuss here why existing 3D interconnects are not considering heterogeneity sufficiently. Therefore, we focus on three individual topics that are also covered by this work: First, we highlight the \review{differentiating} aspects of our models for communication in heterogeneous 3D interconnects in comparison to other models. Second, we turn the spotlight to routing and we discuss related approaches in 3D systems. Third, we consider existing architectures for 3D interconnects. 
Finally, we combine the approaches and argue, why these existing methods and architectures are not well-applicable to build heterogeneous 3D interconnects. 

Modeling properties of both technology and communication in interconnects is a well-established research topic with a wide range of works. There are many works on 3D NoCs modeling performance (e.g. \cite{Kiasari.2013}) or power and area (e.g.~\cite{Kahng.2012}). The majority of the performance models can be applied \review{only} under zero load because non-dynamic behavior is easier to model. For instance, in \cite{Nikitin.2009}, a performance model is proposed with focus on Quality of Service (QoS). It assumes constant service time and purely synchronous routers. This cannot be applied to heterogeneous 3D interconnects, as those are not purely synchronous. \cite{Ogras.2010} models average latency, throughput and network characteristics without QoS guarantees. Again, this model cannot be applied for heterogeneity, because the model assumes \review{one globally synchronized clock}. In a similar approach, \cite{Arjomand.2009} models performance and power of NoCs with wormhole routers. Again, only homogeneous router architectures and technologies are covered. In a more sophisticated approach, the dynamic properties, namely load and congestion, are covered by some works, as well, e.g.\ \cite{Foroutan.2010, Kiasari.2013}.  A common approach is the use of queueing models \cite{Kiasari.2013}, in which the dynamic behavior is summarized by network statistics. Although there has been considerable effort to analytically model the behavior of interconnection networks, there is an urgent need for models for heterogeneous 3D interconnects as their properties, especially differences in clock speeds within a network, have not been considered sufficiently so far.

Routing for 3D interconnects is, just as models, a very common field, as well. The most traditional approach is the extension of strategies from 2D by one dimension. For instance, dimension-ordered routing can be directly used in 3D, which \review{has} already been done over a decade ago \cite{Kim.2007}. Since then, many improvements have been proposed: DyXYZ \cite{Ebrahimi.2013} is a \review{fully adaptive} routing considering congestion in 3D. Elevator-first routing \cite{Dubois.2013} enables packet transmission in vertically partially connected 3D NoCs. LA-XYZ \cite{Ahmed.2012} uses look-ahead strategies to improve latency and throughput by approximately 45\% and while reducing power by 15.9\%. Furthermore, fault-tolerance can be implemented, as well \cite{Ahmed.2016}. Despite the large number of papers in this area, none of these works target heterogeneous 3D interconnects, in which the transmission speed is not only impeded by congestion but also and more fundamentally by the used technology nodes.

\review{Architectures for 3D interconnects have also been researched for many years. These solutions manly target performance increases:} \review{Ref. }\cite{Kim.2007} was one of the first routers for 3D systems. Extending standard architectures, \cite{Krishna.2008} proposes express virtual channels (EVC) which combine conventional full-swing, short-range wires and low-swing, multi-drop wires, achieving 25\% latency reductions. This ultimately lead to the single-cycle NoC router SMART \cite{Chen.2013} with 60\% latency savings. Rather popular is MIRA \cite{Park.2008}, which was the first 3D-stacked router. All router components, except central arbitration, are sliced in logical-equal parts and are distributed. Thereby, up to 51\% latency improvement for synthetic workloads are achieved. \review{For heterogeneous 3D integration, architectures for not-purely synchronous communication are highly relevant. There exist only a limited number of works: Ref.} \cite{Vivet.2017} proposes a router architectures which is limited by the slower clock frequency for packets traveling along the asynchronous path. However, enabling asynchronous communication between routers is not a common topic due to its large overhead and limited practical relevance in homogeneous 3D systems. The limitations of non-purely synchronous communication between routers as intrinsically found in heterogeneous 3D interconnects is a key for their integration and, therefore, \review{is one of the key contributions of this publication}.

To summarize our discussion of the related work, none of the aforementioned works target the special requirements of heterogeneous 3D integration. \review{In terms of \textit{models}, there exist no well-known works on latency and throughput for heterogeneous 3D interconnects. The majority of \textit{routing} \textit{algorithms} for 3D interconnects do not consider performance differences between routers due to varying technology nodes; yet, this effect is significant as we will show in this work. The works on \textit{architectures} for 3D interconnects assume synchronous routers; yet, routers in heterogeneous 3D systems are not clocked purely synchronous, as this paper also will show. However, works on asynchronous routers do not target to increase the vertical link bandwidth to bridge the throughput gap posed by heterogeneity. Rather, they decrease the bandwidth to increase yield from TSV manufacturing, e.g. by serialization} \cite{Darve.2011}. \review{This is orthogonal to our targets and therefore, these approach cannot be used. Also, distributed architectures such as MIRA cannot be applied to heterogeneous 3D SoCs: First, processing elements would need to be equally distributed among all layers, but are actually located in that layer best suited for their technological requirements. Second, router delay is limited by the slowest layer and router area is dominated by the most expensive layer. To the best of our knowledge, there are no related works which consider the relationship between routing algorithms and architectures but this topic is very relevant in heterogeneous 3D SoCs due to latency and throughput limitations. Therefore, we see an urgent need to tackle these issues in one integrated design approach: Efficient heterogeneous 3D interconnects are only possible by means of a simultaneous design of routing algorithms and architectures, as demonstrated by this paper.}







\section{Modeling technology heterogeneity in 3D interconnects}\label{sec:model}

Heterogeneity influences every metric of the interconnect and we model the influence on area and timing, which are most relevant. The model accounts for any type of commercial technology and any feature size. We do not model power due to the diverse influence parameters; e.g.\ data transmitted vastly influence power consumption of links, which is hard to model a priori without simulations \cite{GarciaOrtiz.2017}.

We start this section by our technological assumptions for the models. We model an interconnect with NoC routers vertically connected via vertical interconnects. These interconnects can either monolithic inter-tier vias (MIVs) or trans-silicon vias (TSVs) due to their high interconnection density. 
We use the following model assumptions:
\begin{enumerate}
	\item The delay of a vertical interconnect is negligibly small, in comparison to horizontal and logic delay. The reason lies therein that vertical interconnects have a constant length of \unit[50]{\textmu m} due to substrate thickness \cite{.2012l}.
	\item We neglect modeling area of vertical interconnects because MIVs nearly have no overhead and TSVs have a constant one. 
	\item Vertically connected routers must not be located at the same physical 2D position (in their layer). Vertical links and routers can be horizontally connected via  redistribution. This variability is limited by the link delay. We model this by conversion of router locations to router addresses. 
	\item We show advantages of our approach in terms of power using simulations only. We do not model the different power properties of horizontal and vertical interconnects as this is a complex topic on its own. For models we kindly refer to \cite{Bamberg.2018, Joseph.2019b}. 
	\item We model synchronous routers within layers and not purely synchronous routers between layers, following a GALS approach (globally asynchronous, locally synchronous). This is reasoned as follows: Heterogeneous 3D interconnects will be in non-purely synchronous settings, since components in disparate technologies are potentially clocked at varying speeds and the slowest, synchronous clock wastes performance. Routers within layers, however, are in the same technology and therefore are clocked synchronous. 
\end{enumerate}
To summarize, the chosen assumptions are the most common integration principles for 3D interconnects and therefore are a reasonable choice.

Before introducing our models, we explain our notations and definitions: We consider a chip with $\ell$ layers and their index set $[\ell] = \{1, \dots, \ell\}$. We assume $n$-$m$-mesh topologies of NoCs per layer. The \review{feature size} of the technology nodes of layers, measured in [nm], is given by $\tau: [\ell] \rightarrow \mathbb{N}$. We call a chip layer with index $\iota$ \textit{>>more advanced node<<} than a layer with index $\xi$ if $\tau(\iota) < \tau(\xi)$ (for easy notation). We define:
\begin{mydef}[Relative technology scaling factor]
	Let $\xi$ and $\iota$ be the indexes of layers with technologies  $\tau(\xi)$ and $\tau(\iota)$ and with $\tau(\xi) > \tau(\iota)$. The \emph{relative technology scaling factor} $\Xi$ is:
	\begin{equation} \Xi(\xi, \iota) := \frac{\tau(\xi)}{\tau(\iota)}\end{equation}
\end{mydef}


\subsection{Area model}

The area, which the communication infrastructure in a layer requires, is influenced by two major factors: The size of an individual router and the number of routers. The effect of both factors is encapsulated into an abstract model. It covers the influence of technology nodes, constraints of synthesis tools and router architectures.

\subsubsection{Area of routers}\label{sec:technolgyScalingFactor}
Routers in layers in mixed-signal nodes are disproportionately expensive: While routers still will consist of conventional digital circuits, the technology node, e.g. mixed-signal technology, impacts on the size of routing computation, crossbars and buffers, affecting bot combinational and sequential logic. The overall area consists of logic, for which it is commonly known that it reduces its size (ideally) quadratically for more advanced nodes, and the remainder (e.g.\ power supply) that does not scale approximately and therefore remains constant for different nodes. This is shown illustratively in Fig.~\ref{fig:technologzareascling}. These considerations yield an area model of the form $\hat{\alpha} + \alpha \varsigma^2$, in which $\hat{\alpha}$ is the constant part \review{(i.e.~the non-scaling part)}, $\alpha$ is an non-ideality factor (i.e.~the deviation of the ideally quadratically scaling parts), and $\varsigma$ is the feature size. By this model we define the area scaling factor as the difference between baseline technology, i.e.\ the largest node, and any target technology:
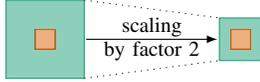
\begin{figure}
	\centering
	\begin{tikzpicture}[scale = .26]
			\draw[draw=col1, fill = col1!50] (-2,-2) rectangle (2,2);
			\draw[fill = white](-.5,-.5) rectangle (.5,.5);
			\fill[draw=col2, fill = col2!50] (-.5,-.5) rectangle (.5,.5);
			
			\draw[draw=col1, fill = col1!50] (10-1.09,-1.09) rectangle (11.09,1.09);
			\draw[fill = white](9.5,-.5) rectangle (10.5,.5);
			\fill[draw=col2, fill = col2!50] (9.5,-.5) rectangle (10.5,.5);
			
			\draw[-latex] (2.1,0) -- (10-1.09-.1,0) node[desc, midway, align = center]{scaling\\ by factor 2};
			\draw[dotted]  (2,2) -- (10-1.09,1.09) ;
			\draw[dotted]  (2,-2) -- (10-1.09,-1.09) ;
			%
			%

	\end{tikzpicture}
	\caption{Area scaling has reducing parts (green) and constant parts (orange).}
	\label{fig:technologzareascling}
\end{figure}
\begin{mydef}[Area scaling factor]
	Let $\xi$ and $\iota$ be the indices of two chip layers with technologies  $\tau(\xi)$ and $\tau(\iota)$ and with technology difference $\Xi (\xi, \iota)$. The \textit{area scaling factor } $s_f: (\mathbb{R}) \rightarrow \mathbb{R}$ is given by:
	\begin{equation}\label{eq:area}
		s_f(\Xi) := \frac{\alpha + \hat{\alpha}}{\frac{\alpha}{\Xi^2} + \hat{\alpha}}
	\end{equation}
	The model assumes that the chip area is normalized to one area unit. The non-ideality factor $\alpha$ denotes, how well the technology scales quadratically. The base technology area offset $\hat{\alpha}$ is dominated by components which do not scale. Both must be evaluated for the used set of technology nodes. Therefore, a small circuit with typical properties is synthesized, such as a basic router model (see Sec.~\ref{sec:results:modelfit}). Then, the parameters can be estimated using function fitting. In an ideal setting, $\alpha = 1$ and $\hat{\alpha} = 0$. As an example, we consider two layers implemented in an ideal theoretical $\tau(1) = \unit[45]{nm}$ and $\tau(2) = \unit[14]{nm}$ technology. The technology scaling factor is $s_f(\Xi(45,14)) = 10.2$. Between \unit[28]{nm} and \unit[45]{nm} nodes it is $s_f(\Xi(45,28)) = 2.58$. 
\end{mydef}

\subsubsection{Number of routers}\label{sec:routersPerLayer}
The different technology nodes influence not only the size of routers, but also their number per layer. The scaling factor $s_f$ can also be applied here to approximate a lower bound for the number of additional routers that can be implemented in a more advanced node. In that manner, we model a constant-area NoC per layer, which might not always be the most common integration approach (cf. our case study in Sec.~\ref{sec:results:casestudy}). If the area has been non-constant, the router count in faster layers would be reduced. Thus, the model underestimates advantages of our approach and therefore is valid, still.

\subsection{Timing model}

The transmission time of packets is determined by the individual timing of each router and the network topology. We model both characteristics; We consider clock delay of individual routers first and then deduct the propagation speed of packets traversing multiple routers.

\subsubsection{Clock delays} \label{sec:clockScalingFactor}
Routers in layers in mixed-signal nodes are potentially slower clocked whilst routers in the more advanced, digital technologies are clocked faster. The ratio, at which the clock delay in different technology nodes scales, is given by the clock scaling factor. There are two effects which influence the clock delay. It is larger than the interconnect delay for large technology nodes; it reduces with node scaling. Interconnect delay does not scale and therefore poses a limit for small nodes. Also, power constrains the maximum achievable clock frequencies. Therefore, the clock scaling factor is modeled by fitting a sigmoid function. Please note that this is an empirical and not a physical model. It has a high accuracy of the fit as shown in in Sec.~\ref{sec:results:modelfit}. If another (empirical or physical) model with similar accuracy is used, the results presented in this paper will not change.
 \begin{mydef}[Clock scaling factor]
	Let $\xi$ and $\iota$ be the indices of two chip layers with technologies  $\tau(\xi)$ and $\tau(\iota)$, with $\tau(\xi) > \tau(\iota)$  and with a relative technology scaling factor $\Xi (\xi, \iota)$. Let $c_b$ be the base clock delay of the layer with index $\xi$ and $c_c$ be the minimum achievable clock delay. Let $\beta$ be the maximum speedup achievable: $\beta:=c_b/c_c$. The \textit{clock scaling factor} $c_f: (\mathbb{R}) \rightarrow \mathbb{R}$ is given by:
	\begin{equation}
		c_f (\Xi) := \frac{\beta}{1+\hat{\beta} \exp\left(-\tilde{\beta}\left(\Xi-\bar{\beta}\right)\right)}
	\end{equation}
	The function converges to the maximum achievable speedup $\beta$. The other parameters must be set by fitting the function to a set of synthesis results (see Sec.~\ref{sec:results:modelfit}).
\end{mydef}

\section{Modeling NoC communication in heterogeneous 3D SoCs}\label{sec:model:comm}

We model the horizontal and vertical communication separately, since different factors are relevant: Communication within a layer is synchronous while communication between layers is not always. Our models calculate latency, throughput and transmission speed under zero load.

\subsection{Horizontal communication}

The speed at which a packet is transmitted horizontally, at zero load, is called \textit{propagation speed}. The propagation speed differs with technology nodes, since the number of routers and the clock frequency of routers differ between layers. Within a layer, routers are synchronous. The \textit{propagation speed of a packet within a layer is given by the distance traveled divided by the packet latency}.
We measure the \textit{distance} that packets travel. All possible positions of routers in a 3D SoC are given by the set $P= \mathbb{R} \times \mathbb{R} \times [\ell]$.
The x- and y-coordinates are measured in [m]\footnote{``measured in [\dots]'' refers to SI-units; ``$[\ell]$'' to the set $\{1,\dots,\ell\}$.}. We use the notation that the symbols $p_x$, $p_y$ and $p_z$ denote the components of each position $p \in P$. Further, packets have a payload, which is modeled by the number of flits transmitted $l \in L = \mathbb{N}$. Together, the set of packets is given by $D = P\times P \times L$. Packets are transmitted from a current (source) position to a destination position. (Please note, that the current position refers to the location of the packet during transmission. This position changes over time and does not refer to the position the packet was initially injected at into the network.) This yields the definition of the horizontal transmission distance:
\begin{mydef}[Horizontal transmission distance]
	Let $\pi$ be a packet with $\pi = \left(p_1, p_2, l\right)$, with source node $p_1$, destination node $p_2$ and length $l$. The horizontal transmission distance $s(\pi)$ is defined as the distance between source and destination positions in x- and y- dimension:
	\begin{equation}\label{eq:transmissionDistance}
	s(\pi) = \norm{\left(p_{1,x}, p_{1,y}\right) - \left(p_{2,x}, p_{2,y}\right)}
	\end{equation}
\end{mydef}
For example the distance between source and destination position in x- and y-dimension of a packet $\pi = (p_1, p_2, l)$ is calculated by $s(\pi) := \norm{\left(p_{1,x}, p_{1,y}\right) - \left(p_{2,x}, p_{2,y}\right)}_1$  in a mesh topology. The norm $\norm{\cdot}_1$ denotes the Manhattan norm ($\norm{p}_1 = \sum_{i = 1}^{n} p_n$ for $p\in \R^n$).

The \textit{latency} of a packet is calculated by the cumulative latency each router adds along the path. Each router requires $\delta(\xi)$ clock cycles to process the head flit in the layer $\xi \in [\ell]$. Thereafter, one flit is transmitted each clock cycle until end of packet. A single router finishes the transmission of a single packet with $l$ flits after $\delta(\xi) + l$ cycles. The constant $\rho(\xi)$ is defined as the average distance between routers in the layer $\xi$. Hence, a packet traverses $\nicefrac{s(\pi)}{ \rho(\xi)} + 1$ routers including the destination router.  This is illustrated for an example in Fig.~\ref{fig:commHor}, in which two consecutive packets are transmitted. In the example, routers have a head delay of $\delta = 3$ and pipelining $\chi = 2$. These considerations yield the following model for horizontal packet head latency that is accurate under assumption of zero load by construction.
\begin{mydef}[Horizontal packet head latency under zero load]
	Let $\pi$ be a packet with $\pi = (p_1, p_2, l)$ and $\xi \in [\ell]$ a layer. The average distance between routers in the layer $\xi$ is $\rho(\xi)$, \review{measured in [nm]}, and the delay for processing head flits per router is $\delta(\xi)$. The clock delay of routers is $\clk(\xi)$, measured in [s]. The \textit{horizontal packet head latency under zero load}, measured in [s], in layer $\xi$ is
	\begin{equation}\label{eq:packetTransmissionTime}
	\Delta_H(\pi, \xi) = \left(\frac{s(\pi)}{ \rho(\xi)}+1\right) \delta(\xi) \clk(\xi).
	\end{equation}
	\review{As given in Definition 4, the horizontal transmission distance is measured in [nm], but not in number of hops. Since the horizontal packet head latency is calculated from the number of hops passed by a packet, the horizontal transmission distance is multiplied with the average distance between routers. This yields the number of routers passed. We use average numbers, as routers will not be spaced evenly if the size of processing elements varies.} Furthermore, please note, that this model is accurate under zero load by construction. We verified this using simulations, as shown in Sec.~\ref{sec:results:modelfit} (Figs.~\ref{fig:SpeedupAlgo1} and \ref{fig:SpeedupAlgo2}).
\end{mydef}

\begin{figure*}
	\centering

\newcommand{\drawpacket}[4]{
\draw[-latex, #4, thick, dashed, #3] (#1,#2) -- (#1+1, #2+1);
\foreach \i in {1, ..., 3}{
\draw[-latex, #4, #3] (#1+\i,#2) -- (#1+\i+1, #2+1);
}
\draw[-latex, #4, #3]  (#1+4,#2) -- (#1+5, #2+1);
}


	\begin{tikzpicture}[scale = 0.60, yscale=-1, xscale = 1.1,
cube/.style={thick, black, col1, fill = col3, fill opacity = 0.2},
axis/.style={-latex,col2, thick}, digital/.style={thick, col1}, digitalLines/.style = {col1}]

\draw[step=1.0,black!60,thin] (0,0) grid (18,3);
\foreach \i in {1,...,3}{
	\node[desc, anchor = east, align = right, xshift = -4pt] at (0, \i)   (a) {\textbf{router} n+\i};
	\draw[] (0,\i) -- (18,\i);
}
\node[desc, anchor = east, align = right, xshift = -4pt] at (0, 0)   (a) {\textbf{router} n\phantom{+0}};

\foreach \i in {1,...,18}{
	\node[yshift = -6pt,desc, anchor = north, align = center, black!60] at (\i, 3)   (a) {t+\i};
}
\node[yshift = -6pt,desc, anchor = north, align = center, black!60] at (0, 3)   (a) {t};
\node[yshift = -6pt,desc, anchor = north, align = center, black!60] at (-1.88, 3)   (a) {\textbf{time:}};

\drawpacket{0}{0}{}{col2}
\drawpacket{3}{1}{}{col2}
\drawpacket{7}{2}{}{col2}

\drawpacket{6}{0}{}{col1}
\drawpacket{9}{1}{}{col1}
\drawpacket{13}{2}{}{col1}

\node [desc, anchor = west] at (9,-.5)(pipe) {pipelining $\delta - \chi$};
\draw[-latex] (pipe.west) -- (8.5, 1.5);

\draw [decorate,decoration={mirror,brace,amplitude=3pt	,raise=1pt},yshift=0pt, draw = col3]
(0,1) -- (3,1) node [desc,col3,midway, xshift = 0pt,yshift = -2pt, anchor = north] {$\delta$};

\draw [decorate,decoration={brace,amplitude=3pt	,raise=1pt},yshift=0pt, draw = col3]
(0,0) -- (12,0) node [desc,col3,midway, xshift = 0pt,yshift = 2pt, anchor = south] {latency $\Delta(\pi, \xi)$};

\end{tikzpicture}

	\caption{Exemplary horizontal communication of two consecutive packets (orange, green).}
	\label{fig:commHor}
\end{figure*}

\begin{mydef}[Horizontal router throughput]
	Let $\pi$ be a packet with  $\pi = (p_1, p_2, l)$ and $\xi\in[\ell]$ a layer. The delay for processing head flits per router is $\delta(\xi)$. The router is pipelined with $\chi(\xi)$ $\in [0,\delta(\xi)]$ steps. The clock delay of routers is $\clk(\xi)$, measured in [s]. The \textit{horizontal router throughput}, measured in [flits/s], is given by the number of flits that a router can pass in a period of time:
	\begin{equation}\label{eq:latencyHorizontal}
		\hat{\Delta}_H(\pi, \xi) = \frac{l}{\left(l+\delta(\xi) - \chi(\xi)\right) \clk(\xi)}
	\end{equation}
\end{mydef}

\subsection{Vertical communication}\label{sec:commVertical}

Only vertical communication is effected by varying clock speeds. We model a non-purely synchronous communication, which allows to model different router and link architectures, such as the mesosynchronous proposed in Sec.~\ref{sec:architectures}.

\begin{mydef}[Vertical packet head latency under zero load]
	Let $\pi$ be a packet with $\pi = (p_1, p_2, l)$ and $\xi$ and $\lambda \in [\ell]$ layers with $p_{1_z} = \xi$ and $p_{2_z} = \lambda$. Without loss of generality, assume that $\xi \leq \lambda$. The clock delay of routers is $\clk(i)$ for all layers $i \in [\ell]$, measured in [s]. The \textit{vertical packet head latency under zero load (downwards)}, measured in [s], is given by the delay each router adds during head flit processing
	\begin{equation}\label{eq:latencyVdown}
	\Delta_V^{\downarrow}(\pi, \xi, \lambda) = \sum_{i = \xi}^{\lambda} \delta(i)\clk(i).
	\end{equation}
	The \textit{vertical packet head latency under zero load (upwards)}, measured in [s], is given by the delay each router adds during head flit processing plus a clock cycle for synchronization. This occurs only once during the path of the packet, since only two types of technology nodes are combined. The slower clock frequency dominates. This is illustrated in Fig.~\ref{fig:verticalTransmission} following the dashed thick arrow for the transmission of the head flit. In the Figure, the example uses routers in two layers, clocked at a frequency of $1$ and of $\nicefrac{1}{2}$. All routers have $\delta = 0$ and pipelining $\chi = 0$.
	\begin{equation}\label{eq:latencyVup}
	\Delta_V^{\uparrow}(\pi, \xi, \lambda) = \sum_{i = \lambda}^{\xi} \delta(i)\clk(i) + \clk(\xi).
	\end{equation}
	Please note, that this model, again, is accurate under zero load by construction, (cf. Sec.~\ref{sec:results:modelfit}).
\end{mydef}

\begin{mydef}[Vertical router throughput]
	Let $\pi$ be a packet with $\pi = (p_1, p_2, l)$ and $\xi$ and $\lambda \in [\ell]$ layers with $p_{1_z} = \xi$ and $p_{2_z} = \lambda$. Without loss of generality, assume that $\xi \leq \lambda$. Routers are pipelined with $\chi(i) \in [0,\delta(\xi)]$ steps in each layer $i \in [\ell]$. The clock delay of routers is $\clk(i)$, measured in [s]. The \textit{vertical router throughput}, measured in [flits/s], is given by the slowest router:
	\begin{equation}\label{eq:throughputV}
	\hat{\Delta}_V(\pi, \xi, \lambda) = \min_{i \in [\xi, \dots, \lambda]} \left\{ \hat{\Delta}(\pi, i) \right\}
	\end{equation}
	Long delays for processing a head flit are not relevant in the case of pipelining. Fig.~\ref{fig:verticalTransmission} demonstrates that the slowest clock dominates the throughput of the transmission for asynchronous chips using an exemplary two-layer chip with routers clocked at a frequency of $1$ and of $\nicefrac{1}{2}$.
\end{mydef}
\begin{figure*}
	\centering

\newcommand{\drawpacketClock}[6]{
\draw[-latex, #4, thick, dashed, #3] (#1,#2) -- (#1+#5, #2+1);
\foreach \i  [evaluate=\i as \j using \i*#6]  in {1,..., 4}{
\draw[-latex, #4, #3] (#1+\j,#2) -- (#1+\j+#5, #2+1);
}
}

\newcommand{\drawpacketClockUp}[6]{
	\draw[-latex, #4, thick, dashed, #3] (#1,#2) -- (#1+#5, #2-1);
	\foreach \i  [evaluate=\i as \j using \i*#6]  in {1,..., 4}{
		\draw[-latex, #4, #3] (#1+\j,#2) -- (#1+\j+#5, #2-1);
	}
}


	\begin{tikzpicture}[scale = 0.6, yscale=-1,xscale = 1.1,
cube/.style={thick, black, col1, fill = col3, fill opacity = 0.2},
axis/.style={-latex,col2, thick}, digital/.style={thick, col1}, digitalLines/.style = {col1}]

\foreach \i in {0,1,...,12}{
	\draw[black!60, thin] (\i, 0) -- (\i, 1);
}
\foreach \i in {0,0.5,...,12}{
	\draw[black!60, thin] (\i, 1) -- (\i,2);
}
\foreach \i in {0,...,2}{
	\draw[] (0,\i) -- (12,\i);
}
\node[desc, anchor = west, align = left, xshift = -4pt] at (-2.5, 0.5)   (a) {\textbf{slower layer}};
\node[desc, anchor = west, align = left, xshift = -4pt] at (-2.5,1.5)   (a) {\textbf{faster layer}};

\foreach \i in {1,2, ...,12}{
	\node[yshift = -0pt,desc, anchor = south, align = center, black!60] at (\i, 0)   (a) {t+\i};
}
\node[yshift = -0pt,desc, anchor = south, align = center, black!60] at (0, 0)   (a) {t};
\node[yshift = -0pt,desc, anchor = south, align = left, black!60] at (-2.07, 0)   (a) {\textbf{time:}};

\drawpacketClock{0}{0}{}{col2}{1}{1}
\drawpacketClock{1}{1}{}{col2}{.5}{1}

\drawpacketClockUp{7}{1}{}{col2}{1}{1}
\drawpacketClockUp{6}{2}{}{col2}{.5}{.5}

\draw [decorate,decoration={mirror,brace,amplitude=3pt	,raise=1pt},yshift=0pt, draw = col3]
(0,2) -- (2.5,2) node [desc,col3,midway, xshift = 0pt,yshift = -2pt, anchor = north] {vertical delay};

\draw [decorate,decoration={mirror,brace,amplitude=3pt	,raise=1pt},yshift=0pt, draw = col3]
(6,2) -- (8,2) node [desc,col3,midway, xshift = 0pt,yshift = -2pt, anchor = north] {vertical delay};

\draw [decorate,decoration={mirror,brace,amplitude=3pt	,raise=1pt},yshift=8pt, draw = col1]
(1,2.4) -- (5.5,2.4);
\draw [decorate,decoration={mirror,brace,amplitude=3pt	,raise=1pt},yshift=8pt, draw = col1]
(7,2.4) -- (12,2.4);
\node [desc,col1,xshift = 0pt,yshift = -2pt, anchor = north] at (6,2.7) {throughput dominated by slowest clock frequency};


\end{tikzpicture}

	\caption{Vertical communication is dominated by the slowest clock frequency.}
	\label{fig:verticalTransmission}
\end{figure*}

\section{Integration issues for heterogeneous 3D interconnects}\label{sec:potentials}

\review{Limitations of heterogeneous 3D interconnects are a result of different transmission speeds in varying technologies as found in Definitions 2 and 3}. This can be overcome by routing algorithms to achieve latency reductions. Routers are not purely synchronous, which will influence the throughput of routers along the packet's path, if it traverses multiple layers. This can be overcome by router architectures with increased throughput. Only simultaneous consideration of latency and throughput enables development of efficient interconnects, which impressively demonstrates the essential need for a co-design of routing strategies and router architectures in heterogeneous 3D SoCs.

\subsection{Tackling latency limitations via novel routing strategies}

\begin{figure}
%
\pgfplotstableread[col sep = comma]{tables/omega.csv}\omegaData

\pgfplotsset{compat=1.11,
	/pgfplots/ybar legend/.style={
		/pgfplots/legend image code/.code={%
			\draw[##1,/tikz/.cd,yshift=-0.25em]
			(0cm,0cm) rectangle (3pt,0.8em);},
	},
}

\begin{tikzpicture}
				\begin{axis}[ybar,
				bar width=0.2cm,
				grid = both,
				ymin = 0,
				ymax = 1.2,
				xmax = 10.3,
								xtick={1.5,3, 4,5, 6, 7, 8, 9, 10},
				xticklabels={\unit[130]{nm}\\mixed-signal,\textcolor{col3}{90}, \textcolor{col3}{65}, \textcolor{col3}{45}, \textcolor{col3}{28}, \textcolor{col3}{20}, \textcolor{col3}{14}, \textcolor{col3}{10}, \textcolor{col3}{7}},
				xlabel= {\textcolor{col3}{digital technology in [nm]}}, 
				legend pos=north west ,
				height =4.5cm,
				width=\linewidth,
				ylabel style={align=center,desc, yshift = -1pt},
				xlabel style={desc, yshift = 14pt,xshift = 20pt, align=center},
				legend style={desc},
				xticklabel style={desc, align = center},
				yticklabel style={desc},
				ylabel={propagation speed \\ $\omega$ in [m/s]}
				]
				\addplot [col2, fill=col2!50] table[x=position,y=omegaModel] {\omegaData};
				\addplot [col1, fill=col1!50] table[x=position,y=omegaReal] {\omegaData};
				
				\legend{model w/ predictive node, commercial node};
				\end{axis}
		\end{tikzpicture}
	\vspace{-13pt}
	\caption{Propagation speed $\omega$ using a three-cycle router.}
	\label{fig:propagationSpeed}
\end{figure}

This publication answers whether communication via certain layers in heterogeneous 3D SoCs is faster, depending on the technology constraints, which can be exploited by routing algorithms. Intuitively, the first guess is that more advanced technology nodes are faster: routers certainly have a faster clock frequency. But there is a powerful adversary: the size of individual routers shrinks with better technology nodes. Thus more routers are located along the path of packet which add delay. To give a comprehensive answer, the proposed area and the proposed timing model must be considered simultaneously. Using Eq.~\ref{eq:transmissionDistance} and Eq.~\ref{eq:packetTransmissionTime}, and derivation, yield the propagation speed of a packet under zero load.
\begin{mydef}[Propagation speed]
	Let  $\xi\in[\ell]$ be a layer. The \textit{propagation speed in layer $\xi$} is
	\begin{equation}\label{eq:propagationSpeed}
	\omega(\xi) = \frac{\rho(\xi)}{\delta(\xi)\clk(\xi)}
	\end{equation}
	measured in [m/s]. It can be obtained by considering any packet $\pi$ with $\pi = (p_1, p_2, l)$ with distance $s(\pi)$. The speed is distance per time, i.e.\ $\omega(\xi) = \frac{s(\pi)}{\Delta_H(\pi, \xi) } $.
\end{mydef}
The propagation speed $\omega$ is shown in Fig.~\ref{fig:propagationSpeed} for commercial \unit[130]{nm} mixed-signal and \unit[90]{nm} -- \unit[28]{nm} digital technology, using the synthesis results for our NoC router with a head flit delay of  $\delta = 3$ and a 2$\times$2 NoC in the mixed-signal layer. This yields a horizontal transmission speed improvement of between $2.7\times$ and $4.3\times$, comparing mixed-signal and digital technologies. We see that the more powerful adversary which dominates is the clock scaling, whose influence is stronger than the effect of area scaling. To show the effect for other technology nodes, we fit the proposed models to the data (see Sec.~\ref{sec:results:modelfit}). The results are shown in Fig.~\ref{fig:propagationSpeed}, as well. The models can be used to predict the propagation speed for technology nodes below \unit[28]{nm}. This demonstrates potentials of our approach for more modern technologies, but we do not use this for the further evaluation, since it is predictive. The performance speed improvement is between $5.1\times$ and $3.3\times$. It is lower for more modern technologies due to limits posed by clock frequency scaling. Thus, clock frequency scaling remains dominant over area scaling, yet its advantages decline; Routing algorithms utilizing this are proposed in Sec.~\ref{sec:routing}.

\subsection{Tackling throughput limitations via novel router architectures}\label{sec:throughput}

We consider the influence of heterogeneity on throughput. Let's consider, for sake of simplicity, only packets with length $l$. Then, according to Eq.~\ref{eq:latencyHorizontal}, the throughput of horizontal communication is  $\hat{\Delta}_H = \frac{1}{\clk(\xi)}$: it is determined by the layer's clock frequency. If communication spans layers in another technology (i.e.\ with another clock frequency), Eq.~\ref{eq:throughputV} yields the vertical throughput:
\begin{align}
\begin{split}
\hat{\Delta}(\pi, \lambda) = &\min \{ \hat{\Delta}_V(\pi, \xi, \lambda), \hat{\Delta} (\pi, \lambda) \}\\
=&\hat{\Delta}_V(\pi, \xi, \lambda)  \leq \frac{1}{\clk(\xi)}
\end{split}
\end{align}
We have thereby shown that the throughput of packets which spans heterogeneous layers is determined by the slowest clock frequency: \emph{the chain is only as strong as its weakest link.} This effect poses a universal limitation to routing in heterogeneous 3D SoCs: \emph{communication may not span slower clocked layers if high throughput is required}. This issue cannot be circumvented by routing algorithms, since the only viable option is to avoid slower layers, which is impossible for a packet to and from this layer. This has two consequences: First, horizontal transmission in slower layers must be reduced to a minimum. Second, if a packet originates from a slow layer or is designated to a slow layer, the effects of their slow clock frequency must be minimized. This can only be achieved by novel router architectures; We propose an exemplary implementation in Sec.\,\ref{sec:architectures}.

\section{Tackling latency via routing strategies}\label{sec:routing}

In this section, routing strategies for heterogeneous 3D interconnects are developed. We start by abstracting the findings of our models into principles in Sec.~\ref{sec:principles}. Next, we shorty introduce some technical preliminary considerations for our setting in Sec.~\ref{sec:routing:prelim}. Finally, we can develop our routing strategies based on the principles in Secs.~\ref{sec:ra:1} and \ref{sec:ra:2}. The validity of the routing strategies is explained in Sec.~\ref{sec:deadlocks} by proving deadlock and livelock freedom.

\subsection{Principles for routing in heterogeneous 3D interconnects}\label{sec:principles}

The potentials as discussed in Sec.~\ref{sec:potentials} reveal that transmission through different layers can yield a performance advantage which is unique to heterogeneous 3D interconnects. This can be exploited by the following two paradigms for routing strategies:

\begin{itemize}[noitemsep,topsep=0pt, label={--}]
\item\textit{"Stay in faster layers!"}: Packets should stay as long as possible in layers which provide higher propagation speeds. An example is shown in Fig.~\ref{fig:stayinfasterlayer}. The sectional drawing of a two-layered chip is depicted. The layers are in MS and digital technology with $s_f = 4$. Usually, the data transmitted from routers R\,1 to R\,2 stay in the upper layer until reaching the router above R\,2 (depicted in orange color). This path is slower than the way back via the lower layer in the more advanced technology node. Thus, it is favorable to route all packets via the preferred path, depicted in green.

\item\textit{"Go through faster layers!"} If the performance gain is large, packets can  be routed via adjacent, faster layers since the path is faster. An example is shown in Fig.~\ref{fig:gothroughfasterlayer}. A sectional drawing of a two-layered chip is depicted. The layers are in mixed-signal and digital technology with $s_f = 4$. The routers R\,1 and R\,2 are communicating. Usually, data is transmitted via the upper layer, which is slower than the lower layer. Therefore, it is favorable to route packets via the orange path.
\end{itemize}

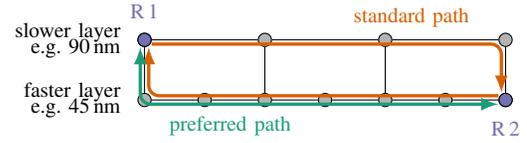
\begin{figure}
		\centering
		\begin{tikzpicture}[scale=.80]
	\makeNoCSectional{slower layer \\[-3pt]e.g. \unit[90]{nm}}{faster layer\\[-3pt]e.g. \unit[45]{nm}};
	
	\node [router, fill = col3]  (red1) at (01) {};
	\node [router, fill = col3]  (red2) at (62) {};
		
	\node [desc, above = 2pt of 01.north, col3] (pe1) {R\,1};
	\node [desc, below = 2pt of 62.south, col3] (pe2) {R\,2};
	
	\draw[-latex, col2, line width = 1.25pt, rounded corners] ([yshift = -2pt]01.east) -| ([xshift = -2pt]62.north) node [midway, desc, col2, yshift = 10pt, xshift = -34pt] {standard path};
	
	\draw[latex-, col2, line width = 1.25pt, rounded corners] ([xshift = 2pt]01.south) |- ([yshift = 2pt]62.west);
	
	\draw[latex-latex, col1, rounded corners, line width = 1.25pt] ([xshift = -2pt]01.south) |- node [midway, desc, col1, yshift = -8pt, xshift = 34pt] {preferred path} ([yshift=-2pt]62.west) ;
\end{tikzpicture}
		\caption{\textit{"Stay in faster layers!"}: The green paths are faster than the orange paths.}
		\label{fig:stayinfasterlayer}
\end{figure}
\begin{figure}
		\centering
		\begin{tikzpicture}[scale=.8]
	\makeNoCSectional{slower layer \\[-3pt]e.g. \unit[90]{nm}}{faster layer\\[-3pt]e.g. \unit[45]{nm}};
	
	\node [router, fill = col3]  (col3) at (01) {};
	\node [router, fill = col3]  (col3) at (31) {};
		
	\node [desc, above = 2pt of 01.north, col3] (pe1) {R\,1};
	\node [desc, above = 2pt of 31.north, col3] (pe2) {R\,2};
	
	\draw[latex-latex, col2, line width = 1.25pt] ([yshift = 2pt]01.east) -- ([yshift = 2pt]31.west) node [midway, desc, col2, yshift = 6pt] {standard path};
	
	\draw[latex-latex, col1, rounded corners, line width = 1.25pt] ([xshift = 2pt]01.south) -- ([xshift = 2pt, yshift = 2pt]02.east) --node [midway, desc, col1, yshift = -10pt] {preferred path} ([xshift=-2pt, yshift = 2pt]62.west) -- ([xshift=-2pt]31.south) ;
	
\end{tikzpicture}
		\caption{ \textit{"Go through faster layers!"}: The green path from R\,1 to R\,2 is longer yet faster than the orange path.}
		\label{fig:gothroughfasterlayer}
\end{figure}
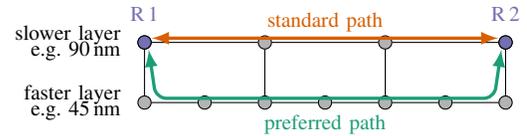

We apply the two aforementioned paradigms to develop two exemplary routing algorithms. The proposed models provide relevant information on their potentials and to set parameters of the routing algorithms. Our proposed models allow to assess which routings are applicable and under which circumstances, since the models are generally valid, i.e.\ can be applied to any topology and set of technology parameters (beyond the proposed algorithms and the setting). Thus, we do not lose generality of models, yet demonstrate their expressiveness.

\subsection{Preliminary considerations}\label{sec:routing:prelim}

\subsubsection{Setting}

A heterogeneous 3D SoC with $\ell \in \N$ layers is used. Its layers are ordered by technology node, as in the vast majority of works on 3D SoCs, e.g.\ \cite{Chen.2016}. The most coarse-grained technology is at the top whilst the most fine-grained technology is bottom-most. Reordering the layers does not influence the models and principles and only requires minor changes to the proposed routing algorithms; hence, this does not lead to a loss of generality. But the order reduces the complexity of descriptions. Our approach is applicable to scenarios without ordered layers, with minor modifications.

Within the heterogeneous 3D SoC we implement a 3D NoC. Each layer has a grid with $m_\xi$ rows and $n_\xi$ columns, wherein $\xi \in [\ell]$ is the layer index. Routers are disposed in rows and columns. Neighboring routers are connected horizontally forming a $m_\xi$-$n_\xi$-mesh topology in layers, which is the most common NoC topology. No router has more than one link in the same direction, e.g.\ we do not model long range links. All routers, except those on the bottommost layer, have a (bidirectional) vertical link to the adjacent router in the next lower layer. This is possible thanks to the ordering of layers (cf.\ Fig.~\ref{fig:finemeshed}). The set of routers $V$ is also the vertex set of the network digraph $T = (V,A)$.\footnote{In \textit{Duato} \cite{Duato.1993} the network digraph is called \emph{interconnection network}.} The set of arcs $A$ contains the directed links between routers.

\subsubsection{Addresses in the network}

Locations of routers are given by a coordinate system with its origin in the SoC's top left corner, as shown in Fig.~\ref{fig:neswud}. Routers have both a physical location and a row and column number. The implementation of routing algorithms must be efficient, i.e.\ calculating with the physical locations is not realistic; using row, column, and layer numbers is. Rows and columns are based on the network digraph and not the physical locations: For example, pairs of neighbored routers in adjacent layers do not necessarily  have the same physical x- and y- coordinate but the same column and row number. This is shown in Fig.~\ref{fig:finemeshed}. We do not depict this in all figures for sake of simplicity. If routers are depicted  as stacked (cf.\ Fig.~\ref{fig:stayinfasterlayer}), we will intend a placement comparable to Fig.~\ref{fig:finemeshed}. We use the notation $w = (w_x, w_y, w_z)$ for $w \in W = \N^3$, which determines row, column, and layer of each router, which is equivalent to the address. An injective function  $m: W \rightarrow P $ converts addresses to locations of routers. Packets with source and destination address are given by $\tilde{D} = W \times W \times L$.

\begin{figure}
		\centering

\newcommand{\tsv}[2]{\draw[line width = 2pt] (#1.south) -- (#1.south |- #2.north);
	\draw (#1.south |- #2.north) -- (#2.north);}
	\begin{tikzpicture}[scale=.85]
	
	\foreach \x in {1,3,5}{
		\node [router]  (\x1) at (\x+.4*rnd-.2,0) {};}
	\foreach \x in {1,3,4,5}
	{\node [router]  (\x2) at (\x+.4*rnd-.2,-1) {};}
	\foreach \x in {1,2, 3,4,5}
	{\node [router]  (\x3) at (\x+.4*rnd-.2,-2) {};}

	\draw(11) -- (31) -- (51);
	\draw(12) -- (32) --(42) -- (52);
	\draw(13) -- (23) --(33) -- (43) --  (53);
	\tsv{11}{12}
	\tsv{12}{13}
	\tsv{31}{32}
	\tsv{51}{52}
	\tsv{32}{33}
	\tsv{42}{43}
	\tsv{52}{53}
	
	\node[desc, below =6pt of  13] (row1){1};
	\node[desc, below =6pt of  23] (row2){2};
	\node[desc, below =6pt of  33] (row3){3};
	\node[desc, below =6pt of  43] (row4){4};
	\node[desc, below =6pt of  53] (row5){5};
	\node[desc, anchor = west, align = left] (layer1) at (-2,0) {layer $w_z$ = 1};
	\node[desc, anchor = west, align = left] (layer2) at (-2,-1) {layer $w_z$ = 2};
	\node[desc, anchor = west, align = left] (layer3) at (-2,-2) {layer $w_z$ = 3};
	\node[desc, anchor =west, align = left] (row) at (layer1.west |- row1) {row/col ($w_x$/$w_y$)};
	
	\foreach \x in {1,2, 3,4}{
 	\draw[dotted] (\x+.5, .25) -- (\x+.5, -2.75);}
 
 	\foreach \y in {0,1}{
 	\draw[dotted] (-1.85,-\y-.5) -- (5.2, -\y-.5); }

	\end{tikzpicture}
		\vspace{-11pt}
		\caption{Logical order using redistribution.}
		\label{fig:finemeshed}
\end{figure}
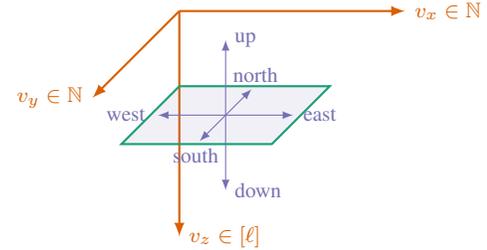
\begin{figure}
	\centering
	\begin{tikzpicture}
	[cube/.style={thick, black, col1, fill = col3, fill opacity = 0.1},
	axis/.style={-latex,col2, thick}, direction/.style={black, col3,  -latex}]
	
	draw the axes
	\draw[axis] (0,0,0) -- (3,0,0) node[desc, anchor=west]{$v_x \in \mathbb{N}$};
	\draw[axis] (0,0,0) -- (0,-3,0) node[ desc, anchor=west]{$v_z \in [\ell]$};
	\draw[axis] (0,0,0) -- (0,0,3) node[desc, anchor=east]{$v_y \in \mathbb{N}$};
	
	\draw[direction] (1, -1, 1) -- (1, -2, 1) node[desc, anchor = west] {down};
	\draw[cube] (0,-1,0) -- (2,-1,0) -- (2,-1,2) -- (0,-1,2) -- cycle;
	
	\draw[direction] (1, -1, 1) -- (1, 0, 1) node[desc, anchor = west] {up};
	
	\draw[direction] (1, -1, 1) -- (1.9, -1, 1) node[desc, anchor = west, xshift = .3] {east};
	\draw[direction] (1, -1, 1) -- (0.1, -1, 1) node[desc, anchor = east, xshift = -1.3] {west};
	\draw[direction] (1, -1, 1) -- (1, -1, 1.9) node[desc, anchor = north, yshift = 1, xshift = -1.5] {south};
	\draw[direction] (1, -1, 1) -- (1, -1, 0.1) node[desc, anchor = south,  yshift = -1, xshift = 1.5] {north};
	
	\end{tikzpicture}
		\vspace{-11pt}
		\caption{Cardinal directions in model coordinates $W$.}
		\label{fig:neswud}
\end{figure}


\subsubsection{Cardinal Directions}
We use the six cardinal directions $C := \{\north, \east, \south, $\newline$ \west, \up,\down\}$ to sort the arcs as shown in Fig.~\ref{fig:neswud}. We define functions which return the set of all links in one of these cardinal directions. These are given for all links $(v,w) \in A$:
\begin{alignat*}{3}
(v,w) &\in \north(A) &\qquad\Leftrightarrow\qquad&   v_x = w_x, v_y > w_y, v_z = w_z \\[-2pt]
(v,w) &\in \east(A) &\qquad\Leftrightarrow\qquad &  v_x < w_x, v_y = w_y, v_z = w_z\\[-2pt]
(v,w) &\in \south(A) &\qquad\Leftrightarrow\qquad &  v_x = w_x, v_y < w_y, v_z = w_z \\[-2pt]
(v,w) &\in \west(A) &\qquad\Leftrightarrow\qquad  & v_x > w_x, v_y = w_y, v_z = w_z\\[-2pt]
(v,w) &\in \up(A) &\qquad\Leftrightarrow\qquad  & v_z > w_z  \\[-2pt]
(v,w) &\in \down(A) &\qquad\Leftrightarrow\qquad&  v_z < w_z
\end{alignat*}
For example, $\north(A)$ contains all links pointing north. 
We further introduce functions that return neighbors of routers in a certain cardinal direction, if a link exists.\footnote{Note, that the above functions are only well defined, if no router has more than one link to the same direction.} Routers at the edges of the network do not have links in that direction which is given by the value $\zero$. We define for all $f \in C$:
\begin{align*}
f: V &\rightarrow V \cup \{\zero\}  \\[-2pt]
v &\mapsto
\begin{cases}
w & \text{if } (v,w) \in f(A) \\
\zero & \text{otherwise} .
\end{cases}
\end{align*}


\subsection{Applying principle 1: \AlgorithmI\ - routing algorithm}\label{sec:ra:1}

We apply principle 1, \textit{"Stay in faster layers!"} and design a minimal and deterministic routing algorithm. Let $\tilde{\pi} = (v, w, l)$ be a packet. If the packet is not transmitted within a layer, i.e.\ $v_{z} \neq w_z$, the faster layer must be identified. Therefore, we apply Eq.~\ref{eq:propagationSpeed} to calculate the average propagation speed at design time. This yields the following rules for transmission of packet $\pi$ (in router with address $v$):
\begin{itemize}[noitemsep,topsep=0pt, label={--}]
	\item If ${\omega}(v_{z}) < {\omega}(w_{z})$,  XYZ routing will be applied.
	\item If ${\omega}(v_{z}) > {\omega}(w_{z})$,  ZXY routing will be applied.
	\item If ${\omega}(v_{z}) = {\omega}(w_{z})$, either will be selected at design time, depending on other network properties such as energy consumption of routers.
\end{itemize}
We call this routing algorithm \AlgorithmI.\footnote{Minimality refers to the shortest path in the interconnection network. In terms of hop distance the proposed routing algorithm is not minimal. It is, however, if the links in the interconnection graph are weighted with their speed.} Since layers are ordered by technology and hence by transmission speed, the implementation 
extends deterministic XYZ simply by reordering \texttt{if}-statements.
Routers will only require additional flag storing information if faster layer is located below, above or is indeed this actual layer. The resulting routing is illustrated in Fig.~\ref{fig:routingCase2}.

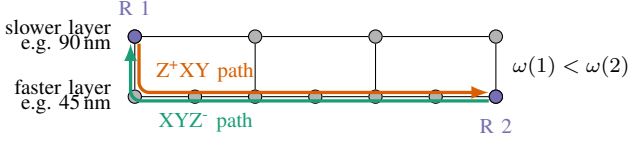
\begin{figure}
	\centering
	\begin{tikzpicture}[scale=.8]
	\makeNoCSectional{slower layer \\[-3pt]e.g. \unit[90]{nm}}{faster layer\\[-3pt]e.g. \unit[45]{nm}};
	
	\node [router, fill = col3]  (red1) at (01) {};
	\node [router, fill = col3]  (red2) at (62) {};
	\node [desc, above = 2pt of 01.north, col3] (pe1) {R 1};
	\node [desc, below = 2pt of 62.south, col3] (pe2) {R 2};

	\draw[-, arrows={-latex[col2,fill=col2]}, draw=col2, rounded corners, line width = 1.25pt] ([xshift=2pt]01.south) |- node [midway, desc, col2, yshift = 8pt, xshift = 25pt] {Z\textsuperscript{+}XY path}  ([yshift=2pt]62.west) ;

	\draw[latex-, col1, rounded corners, line width = 1.25pt] ([xshift = -2pt]01.south) |- node [midway, desc, col1, yshift = -8pt, xshift = 28.5pt] {XYZ\textsuperscript{-} path} ([yshift=-2pt]62.west) ;
	
	\node [desc, right of =31] (node1) {};
	\node [desc, right of =62] (node2) {};
	\node[desc] (omega) at ($(node1)!0.5!(node2)$){$\omega(1) < \omega(2)$};


\end{tikzpicture}
	\caption{\ZXYZ\ routing: transmission through the lower layer.}
	\label{fig:routingCase2}
\end{figure}
\begin{definition} [Routing function $R_1$ for \AlgorithmI\ routing]\label{def:R_1}
	Let $T = (V, A)$ be the topology digraph with the set of routers $V$ and the set of links $A$. Further, $\mathcal{P}(A)$ is the power set of $A$. The routing function $R_1: V \times V \rightarrow \mathcal{P}(A) $ is defined as:\footnote{Due to the setting all routers have a downwards vertical link (except those in the bottommost layer); thus $\{\zero\}$ is impossible by construction (proved in Lemma~\ref{lem:con1}).}
	\begin{align*} 
	(v,d) &\mapsto
	\begin{cases}
	\emptyset & \text{for } v = d \\[-2pt]
	\{\north(v)\} & \text{for } v_x = d_x, v_y > d_y, v_z \geq d_z \\[-2pt]
	\{\east(v)\} & \text{for } v_x < d_x, v_z \geq d_z \\[-2pt]
	\{\south(v)\} & \text{for } v_x = d_x, v_y < d_y, v_z \geq d_z \\[-2pt]
	\{\west(v)\} & \text{for } v_x > d_x, v_z \geq d_z \\[-2pt]
	\{\up(v)\} & \text{for } v_x = d_x, v_y = d_y, v_z > d_z \\[-2pt]
	\{\down(v)\} & \text{for } v_z < d_z.
	\end{cases}
	\end{align*}
\end{definition}

\subsection{Applying principle 2: \AlgorithmII\ - routing algorithm}\label{sec:ra:2}

We apply principle 2, \textit{"Go through faster layers!"}. This requires to identify a quicker path for packets using detours. The identification of the best path depends on the position of source and destination, since there is an overhead when routing to the fastest layer for vertical transmission. We assess under which circumstances routing via an adjacent layer is advantageous. Let $\tilde{\pi} = (v, w,l)$ be a packet with source address $v$ and destination address $w$. Let $\pi$ be the corresponding packet after applying $m$ to convert addresses to locations. The transmission time under zero load in the layer $v_z$ is $\Delta_H(\pi, v_z)$ (Eq.~\ref{eq:packetTransmissionTime}). Let $\lambda \in [\ell]$ be another layer, through which the packet could potentially be transmitted. The transmission time via layer $\lambda$ is the transmission time for traversing vertical links, plus time within layer $\lambda$. Applying the model yields the condition under which routing via layer $\lambda$ has a smaller latency:
\begin{equation}\label{eq:reroutingLatency}
\begin{split}
\Delta_H(\pi, \xi) \quad>\quad & \Delta_V^{\downarrow}(\pi,\xi, \lambda) + \Delta_H(\pi, \lambda) \\
+& \Delta_V^{\uparrow}(\pi,\xi, \lambda) - 2\delta(\lambda)\clk(\lambda)
\end{split}
\end{equation}

We calculate a threshold distance $\phi(\xi, \lambda)$ that determines the minimum distance in layer $\xi$ for which rerouting via layer $\lambda$ is faster. Please note that we assume two layers in disparate technologies which are adjacent. It is not useful to use another than the uppermost digital layer to save vertical transmission time. Nonadjacent layers in mixed signal nodes have larger thresholds. Eq.~\ref{eq:reroutingLatency}, with $\phi:=s(\pi)$ yields $\left( \frac{\phi\delta(\xi)}{\rho(\xi)} - \rho(\xi) - 1\right) \clk(\xi) = \left( \frac{\phi}{\rho(\lambda)} + 1 \right)\delta(\lambda) \clk(\lambda)$, which is transformed to:
\begin{align}
		\phi(\xi, \lambda) =
	\begin{cases}
	\frac{\left(\delta(\xi) \clk(\xi) + \delta(\lambda) \clk(\lambda) + \clk(\xi)\right)\rho(\xi)\rho(\lambda)}{\delta(\xi) \clk(\xi)\rho(\lambda) - \delta(\lambda)\clk(\lambda)\rho(\xi)}& \text{for } \xi < \lambda\\
	\infty & \text{else}
	\end{cases}\label{eq:phi}
\end{align}

Note, that $\infty$ can be replaced by any value larger the size of the chip. The two routing conditions are: (a) If a $\lambda$ exists with $s(\pi) = s((m(v), m(w), l)) > \phi(v_z, \lambda)$, ZXY routing will be applied in direction of $\argmin_{\lambda \in [\ell]}\phi(v_z, \lambda)$. (b) If $s(\pi) = s((m(v), m(w), l)) \leq\phi(v_z, \lambda)$ for all $\lambda \in [\ell]$, XYZ routing will be applied. There are two bottlenecks for run-time calculation: First, online selection of the best layer by evaluation of $\argmin$ is too expensive. A layer $\Lambda$ must be selected at design time. From a practical standpoint, the uppermost digital layer is preferred because it offers high speed and low overhead for vertical transmission.\footnote{Without loss of generality, we set $\Lambda:=\ell$ in proofs.} Second, addresses must be converted in locations. Therefore, we convert the location threshold distance $\phi$ into a hop distance by division through the average router distance in the digital layers:
\begin{equation}
  \Phi(\xi, \Lambda) :=\left\lceil\nicefrac{\phi(\xi, \Lambda)}{ \rho(\ell)}\right\rceil\label{eq:Phi}
\end{equation}
It is required that $\phi$ is smaller than the outside measurements of the chip so that the routing can be applied. For a combination of a commercial \unit[130]{nm} mixed signal node with commercial 90 -- \unit[28]{nm} digital nodes and a 4-4 NoC in the layer in mixed signal technology, $\phi$ is between 0.63 and 0.45 for a chip with edge length normalized to 1. Hence, packets traveling more than 2 or 3 hops in the layer in mixed signal node are routed via the adjacent layer.

To summarize, the routing algorithm has these conditions for a packet $\tilde{\pi} = (v,w,l)$ in router $v$:
\begin{itemize}[noitemsep,topsep=0pt, label={--}]
	\item If $|v_{x}-w_{x}| + |v_{y}-w_{y}| \leq \Phi (\xi, \Lambda)$, XYZ routing will be applied.
	\item If $|v_{x}-w_{x}| + |v_{y}-w_{y}|  > \Phi (\xi, \Lambda)$, the packet will be routed $\down$.
\end{itemize}
We call this routing \AlgorithmII. It is illustrated in Fig.~\ref{fig:reroutingCase2}. 
\begin{figure}
	\centering

\begin{tikzpicture}[scale=.85]
	\makeNoCSectional{}{};
	\node [desc, left of =01, xshift = -8pt] (node1) {slower layer $\xi$};
	\node [desc, left of =02, xshift = -8pt] (node2) {faster layer $\Lambda$};
	
	\node [router, fill = col3]  (col11) at (11) {};
	\node[router, fill = col2] (col01) at (01){};
	\node[router, fill = col2] (col21) at (21){};
	\node[router, fill = col1] (col31) at (31){};
		
	\node [desc, left = 0.05pt of 11.south, col3, yshift = -3.8pt] (pe1) {R\,1};
	
	\draw[-latex, col2, line width = 1.25pt] ([yshift = -2pt]11.west) -- ([yshift = -2pt]01.east) node [midway, desc, col2, yshift = 6pt] {};
	\draw[-latex, col2, line width = 1.25pt] ([yshift = -2pt]11.east) -- ([yshift = -2pt]21.west) node [midway, desc, col2, yshift = 6pt] {};
	\draw[-latex, col1, rounded corners, line width = 1.25pt] ([xshift = 2pt]11.south) -- ([xshift = 2pt, yshift = 2pt]22.east) --node [midway, desc, col1, yshift = 6pt] {} ([xshift=-2pt, yshift = 2pt]62.west) -- ([xshift=-2pt]31.south) ;
	
	\draw [decorate,decoration={brace,amplitude=3pt	,raise=1pt},yshift=-10pt, draw = col3]
	($ (01.north) + (0,0.1) $) -- ($ (21.north) + (0,0.1) $) node [desc,col3,midway, xshift = 0pt,yshift = 2pt, anchor = south] {$\Phi(\xi, \Lambda)$};
	\draw[dotted, col3] ($ (01.north) + (0,0.1) $)  -- (01.north);
	\draw[dotted, col3] ($ (21.north) + (0,0.1) $)  -- (21.north);
	
	\draw [decorate,decoration={brace,amplitude=3pt	,raise=1pt},yshift=-10pt, draw = col3]
	($ (01.north) + (-.4,0.7) $) -- ($ (21.north) + (.4,0.7) $) node [desc,col3,midway, xshift = 0pt,yshift = 2pt, anchor = south] {$\phi(\xi, \Lambda)$};	
	\draw[dotted, col3] ($ (01.north) + (-.4,0.7) $)  --  ($ (01.north) + (-.4,0) $);
	\draw[dotted, col3] ($ (21.north) +  (.4,0.7) $)  --  ($ (21.north) +  (.4,0) $);
\end{tikzpicture}
%
	\caption{\AlgorithmII\ routing: A detour is faster for long distances.}
	\label{fig:reroutingCase2}
\end{figure}
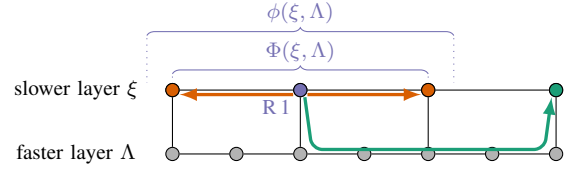


\begin{definition} [Routing function $R_2$ for \AlgorithmII] \label{def:R_2}
	Let $T = (V, A)$ be the topology digraph with the set of routers V and the set of links A. Let $\Lambda$ be a layer which is selected for rerouting at design time. Let $\Phi(\xi, \Lambda)$ be a threshold for rerouting according to Eq.~\ref{eq:Phi}. The routing function $R_2: V \times V \rightarrow \mathcal{P}(A) $ is defined as:
	\begin{align*}
	(v,d) &\mapsto
	\begin{cases}
	\emptyset & \text{for } v = d \\[-2pt]
	\{\down(v)\} & \text{for } |v_x-d_x|+|v_y + d+y| \\[-2pt]
						& \phantom{for } >\Phi(v_z, \Lambda), v_z\geq d_z\\[-2pt]
	\{\down(v)\} & \text{for } v_z < d_z. \\[-2pt]
	\{\north(v)\} & \text{for } v_x = d_x, v_y > d_y, v_z \geq d_z, \\[-2pt]
						& \phantom{for } |v_y-d_y|\leq\Phi(v_z, \Lambda) \\[-2pt]
	\{\east(v)\} & \text{for } v_x < d_x, v_z\geq d_z,\\[2pt]
						&  \phantom{for }|v_x-d_x|\leq\Phi(v_z, \Lambda) \\[-2pt]
	\{\south(v)\} & \text{for } v_x = d_x, v_y < d_y, v_z\geq d_z, \\[2pt]
						&  \phantom{for }|v_x-d_x|\leq\Phi(v_z, \Lambda)  \\[-2pt]
	\{\west(v)\} & \text{for } v_x > d_x, v_z\geq d_z, \\[2pt]
						&  \phantom{for }|v_x-d_x|\leq\Phi(v_z, \Lambda)  \\[-2pt]
	\{\up(v)\} & \text{for } v_x = d_x, v_y = d_y, v_z > d_z \\
	\end{cases}
	\end{align*}
\end{definition}

\subsection{Proof of validity: deadlock and livelock freedom}\label{sec:deadlocks}

We prove that the routing algorithms are free of deadlocks and livelocks. We make use of \textit{Duato's theorem} \cite{Duato.1993}, according to which a routing is deadlock-free if the routing function is connected and the channel dependency graph is cycle free. We also use terms and definitions from \cite{Duato.1993} without further explanation, such as \emph{routing function}, \emph{adaptive}, \emph{connected}, \emph{direct dependency}, and \emph{channel dependency graph}. If there is a direct dependency from $a$ to $b$, we also say: \emph{>>$b$ is direct dependent on $a$.<<} Graph related terms like \emph{path}, \emph{closed walk}, or \emph{cycle}  are used as defined in \cite{Korte.2002}. 

We introduce the terms \emph{possible turn} and \emph{impossible turn} according to a routing function $R$. These terms denote, if the routing functions permits consecutive flow of packets in these directions.
\begin{definition} A pair of cardinal directions $(f,g) \in C\times C$ is called a possible turn according to $R$, if there exist two consecutive arcs, $(u,v)$ and $(v,w) \in A$, with:
$(u,v) \in f(A)$, $(v,w) \in g(A)$
and there is a direct dependency from $(u,v)$ to $(v,w)$.
A pair of cardinal directions that is not a possible turn is called an impossible turn according to $R$.
\end{definition}
\begin{lemma}{}
	\label{lem:cyc}
	If there is a cycle in the channel dependency graph (CDG), then we can also find a closed walk $(v_1, a_1, v_2, \dots, v_k, a_k, v_1)$ (for $k \in \N$) in the topology digraph with
	\begin{itemize}[noitemsep,topsep=0pt, label={--}]
		\item $a_{i+1}$ is direct dependent on $a_i$ for all $i \in \{1, \dots, k-1\}$,
		\item and $a_1$ is direct dependent on $a_k$.
	\end{itemize}
\end{lemma}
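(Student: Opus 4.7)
The plan is to prove this by unpacking the definitions of the channel dependency graph (CDG) and of direct dependency, which together immediately yield the desired closed walk in the topology digraph. Recall that the vertex set of the CDG is the arc set $A$ of the topology digraph, and there is an edge from $a$ to $b$ in the CDG precisely when $b$ is directly dependent on $a$. By the definition of direct dependency (Duato, \cite{Duato.1993}), if $b$ is directly dependent on $a$, then $a$ and $b$ are consecutive arcs in the topology digraph, i.e.\ there exist routers $u, v, w \in V$ with $a = (u,v) \in A$ and $b = (v,w) \in A$. In particular, the head of $a$ equals the tail of $b$.

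First, I would take an arbitrary cycle in the CDG and write it as a sequence of distinct vertices $a_1, a_2, \dots, a_k$ with CDG edges from $a_i$ to $a_{i+1}$ for $i = 1, \dots, k-1$ and a closing CDG edge from $a_k$ to $a_1$. By translating each CDG edge back via the definition above, for every $i \in \{1, \dots, k-1\}$ the arc $a_{i+1}$ is directly dependent on $a_i$, and also $a_1$ is directly dependent on $a_k$.

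Next, I would extract the shared routers: for each $i \in \{1,\dots,k-1\}$, define $v_{i+1} \in V$ to be the router such that $a_i = (v_i, v_{i+1})$ and $a_{i+1} = (v_{i+1}, v_{i+2})$ (where $v_1$ is the tail of $a_1$). The closing direct dependency from $a_k$ to $a_1$ means the head of $a_k$ coincides with the tail of $a_1$, which is exactly $v_1$, so $a_k = (v_k, v_1)$. Concatenating these arcs yields the sequence
\begin{equation*}
(v_1, a_1, v_2, a_2, v_3, \dots, v_k, a_k, v_1),
\end{equation*}
which is by construction a walk in the topology digraph that returns to its starting vertex $v_1$, hence a closed walk. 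The direct dependency conditions required in the statement follow directly from the CDG cycle we started with.

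The proof is essentially definitional, so there is no real obstacle; the only subtlety worth stating explicitly is that the output is only guaranteed to be a closed \emph{walk}, not a cycle in the topology digraph, since the intermediate routers $v_i$ need not all be distinct even though the arcs $a_i$ are. This is precisely why the lemma is phrased in terms of closed walks, and it is the reason the statement cannot be strengthened to extract a simple cycle in $T$ directly. I would explicitly remark on this to make clear the lemma is tight as stated and suffices for its intended use in the subsequent deadlock-freedom arguments.
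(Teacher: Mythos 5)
Your proposal is correct and follows essentially the same route as the paper's proof: unpack the CDG cycle, use the definition of direct dependency to identify the shared router between consecutive arcs, and concatenate these arcs into a closed walk in the topology digraph. Your closing remark that the result is only a closed walk (not necessarily a cycle) because intermediate routers may repeat is a worthwhile clarification that the paper leaves implicit.
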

\begin{proof}
	Assume that there is a cycle $(\{a_1, \dots, a_k\},$ $\{(a_1, a_2),\dots,(a_{k-1},a_k),(a_k,a_1)\})$ in the CDG. According to the definition of direct dependency, the destination node of $a_i$ in the topology digraph is also the source node of $a_{i+1}$ (for all $i \in \{1, \dots, k\}$, and $a_{k+1}:= a_1$). Let us call this node $v_{i+1}$ (for all $i \in \{1, \dots, k\}$). Then, $(v_{k+1}, a_1, v_2, \dots, v_k, a_k, v_{k+1})$ is a closed walk in the topology digraph.
\end{proof}


\subsection{\AlgorithmI: $R_1$ is deadlock-free} 

By looking at the definition of $R_1$, we can determine the impossible turns and the possible turns. Here, we assume that the numbers of rows, columns and layers $m_\xi$, $n_\xi$ and $\ell$ are not too small. We assume $m_\xi, n_\xi, \ell \geq 2$ for all $\xi \in \{1, \dots, \ell\}$ as a precaution. Table \ref{tbl:turns} shows which turns are possible.

\begin{lemma}
	\label{lem:link}
	When $R_1$ gives a direction, then the necessary link always exists.
\end{lemma}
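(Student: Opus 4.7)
The plan is to do a case analysis on the seven cases in Definition~\ref{def:R_1}. The case $v=d$ returns $\emptyset$ and needs no link, leaving the four horizontal cases (north, east, south, west) and the two vertical cases (up, down) to verify.

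For the four horizontal cases I would combine the coordinate inequalities in each hypothesis with the fact that row and column indices in layer $\xi$ range over $\{1,\dots,m_\xi\}$ and $\{1,\dots,n_\xi\}$, and with the standing assumption $m_\xi,n_\xi\geq 2$. In the north case, $v_y>d_y\geq 1$ forces $v_y\geq 2$, so $v$ is not in the northernmost row of its layer and $\north(v)\neq\zero$; the west case is symmetric. The east and south cases are slightly subtler because the upper bound on $v$'s coordinate comes from $d$, which may sit in a coarser layer $d_z$ with fewer rows and columns. Here I would invoke the case hypothesis $v_z\geq d_z$ together with the layer ordering (coarser on top, finer below) to obtain $n_{v_z}\geq n_{d_z}$ and $m_{v_z}\geq m_{d_z}$, so that for instance $v_x < d_x\leq n_{d_z}\leq n_{v_z}$ places $v$ strictly inside its own layer's mesh.

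The down case $v_z<d_z$ is immediate: $d_z\leq\ell$ gives $v_z<\ell$, so $v$ is not on the bottommost layer and by the setting has a downward link. The up case is the only one with genuine content. Here $v_x=d_x$, $v_y=d_y$ and $v_z>d_z$, and I must show both that a router occupies position $(v_x,v_y,v_z-1)$ and that $v$ has an up link to it. Existence follows by iterating the layer-ordering property: $d$ sits at column $v_x$ and row $v_y$ in some layer $d_z\leq v_z-1$, and the ordering implies that every column/row populated in a coarser layer also appears in every finer layer beneath, so in particular layer $v_z-1$ carries a router at $(v_x,v_y)$. That router lies in a non-bottommost layer, hence by the setting has a bidirectional vertical link to the adjacent router directly beneath it, which by the addressing convention is $v$ itself; therefore $\up(v)\neq\zero$.

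The main obstacle is the up case, and in particular making the refinement-propagation step precise; the other cases amount to routine coordinate bookkeeping against each layer's mesh boundary.
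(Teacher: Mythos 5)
Your proposal is correct and follows essentially the same two-part strategy as the paper's own proof: the horizontal and downward cases are dispatched by observing that every step moves the packet toward $d$ (so no nonexistent border link is ever requested), and the $\up$ case is resolved via the premise that every non-bottommost router has a bidirectional down link whose other end is precisely $v$. You merely spell out the coordinate bookkeeping and the refinement-propagation of rows/columns across layers that the paper leaves implicit.
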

\begin{proof}
	Places without links in some directions are: (a) At the outside faces of the 3D NoC cuboid links at edges of layers, upward links from the topmost layers, and downward links from the bottommost layer do not exist. (b) Some upward links do not exist between layers if one layer is in another technology than the other layer. 	
	a) By looking at the definition of $R_1$, one can check that every routing step brings the packet nearer to $d$. Hence, the nonexistent links on the outer border of the 3D-NoC are never taken by $R_1$.
	b) Not every router has an $\up$-link. Every router, except those in the bottommost layer, has a down link by premise. Downward links in a router are upward links in the router below:. When router $v$ has the same $x$- and $y$-coordinates as the destination router $d$ and $v$ is below $d$, $v$ has an $\up$-link. These are also the conditions for traveling $\up$ in $R_1$.
\end{proof}

{\footnotesize \sffamily
	\begin{table}
		\caption{Possible turns $(f,g)$ in $R_1$ and $R_2$.}
		\label{tbl:turns}
		\centering
		\begin{tabular}{|cc|cccccc|}
			\hline
			& \multicolumn{1}{c}{$g$:}& n. & e. & s. & w. & u. & d. \\
			\hline
			\multirow{6}{*}{$f$:} & n. & 1 & 0 & 0 & 0 & 1 & 0 \\
			&e. & 1 & 1 & 1 & 0 & 1 & 0 \\
			&s. & 0 & 0 & 1 & 0 & 1 & 0 \\
			&w. & 1 & 0 & 1 & 1 & 1 & 0 \\
			&u. & 0 & 0 & 0 & 0 & 1 & 0 \\
			&d. & 1 & 1 & 1 & 1 & 0 & 1 \\
			\hline
		\end{tabular}
	
	\end{table}
}

\begin{lemma}
	\label{lem:con1}
	$R_1$ is connected.
\end{lemma}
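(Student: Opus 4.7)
The plan is to show, for every ordered pair $(v, d) \in V \times V$, that iterating $R_1$ eventually delivers a packet from $v$ to $d$, i.e.\ produces a finite sequence of arcs whose initial node is $v$ and terminal node is $d$. The natural tool is induction on the Manhattan-style distance
\begin{equation*}
D(v,d) := |v_x - d_x| + |v_y - d_y| + |v_z - d_z|.
\end{equation*}
The base case $D(v,d)=0$ is immediate, because then $v = d$ and $R_1(v,d) = \emptyset$ signals that transmission has terminated.

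For the inductive step I would first check that the six non-trivial branches of Definition \ref{def:R_1} are jointly exhaustive whenever $v \neq d$: splitting on whether $v_z < d_z$, and then, when $v_z \geq d_z$, successively on $v_x$ versus $d_x$ and $v_y$ versus $d_y$, leaves exactly one active sub-case, and that sub-case prescribes a unique cardinal direction $f$. Lemma \ref{lem:link} then supplies the physical arc from $v$ to the neighbour $v' := f(v)$.

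The heart of the induction is a routine verification that $D(v', d) = D(v, d) - 1$ in every branch. The direction is chosen precisely because the corresponding coordinate of $v$ lies strictly on the wrong side of the same coordinate of $d$, so moving one unit in that direction shrinks the relevant summand of $D$ by exactly one and leaves the other two summands untouched. For instance, in the $\up$-branch we have $v_z > d_z$ and $v'_z = v_z - 1 \geq d_z$, whence $|v'_z - d_z| = |v_z - d_z| - 1$; in the $\down$-branch, $v_z < d_z$ gives $v'_z = v_z + 1 \leq d_z$, hence the same drop by one. The inductive hypothesis then yields an $R_1$-path from $v'$ to $d$, which, prefixed by the arc $(v, v')$, closes the step.

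The obstacle I expect to wrestle with is bookkeeping rather than mathematics: I must verify that every $v \neq d$ falls into exactly one branch, and that Lemma \ref{lem:link} applies at every step. The $\up$-branch is the most delicate one, because upward arcs are not universally present; however, $R_1$ only selects $\up$ when $v$ agrees with $d$ in the $x$- and $y$-coordinates and lies strictly below $d$ in the same column, which is precisely the situation covered by Lemma \ref{lem:link}. With those two sanity checks in place, the induction goes through without further difficulty.
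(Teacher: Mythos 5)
Your overall strategy---an induction on the address-space distance $D(v,d)$ that makes termination self-contained---is a genuinely different route from the paper's proof, which only establishes that $R_1$ always returns an existing link and then defers termination to the livelock-freedom result (Theorem~\ref{thm:LLr1}). Your branch analysis, the exhaustiveness check, and the appeal to Lemma~\ref{lem:link} (including the observation that the $\up$-branch is precisely the situation that lemma covers) all match the paper's reasoning, and your version has the advantage of avoiding the forward reference to the livelock theorem.

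There is, however, one step that fails as stated: the claim that $D(v',d) = D(v,d) - 1$ in every branch, i.e.\ that a hop moves exactly one unit in address space. In the paper's setting, row and column numbers are shared across layers and a coarser layer contains routers only at a subset of the global indices (Fig.~\ref{fig:finemeshed}: the top layer occupies columns $1,3,5$). A horizontal hop goes to the \emph{next present} column or row of the current layer's mesh, which can change $v_x$ or $v_y$ by more than one, so a ``decreases by exactly one'' induction does not close; moreover strict decrease is not automatic, since one must also rule out overshooting $d_x$ or $d_y$. (The paper's own livelock proof spends effort on exactly this point: ``each row has a router at position $d_x$'' in the destination layer and below.) The repair is short: horizontal moves are taken only when $v_z \geq d_z$; the vertical-link structure forces the address set of layer $d_z$ to be contained in that of layer $v_z$, so the target column $d_x$ (resp.\ row $d_y$) is present in the current layer, the mesh neighbour is the next present index, and hence the hop cannot jump past $d_x$ while it strictly decreases $|v_x - d_x|$. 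Replacing ``decreases by exactly one'' with ``strictly decreases'' and using strong induction on $D$ (or well-founded descent on $\N$) makes your argument go through; vertical hops are unaffected, since adjacent layers differ by exactly one in $z$ and preserve $x$ and $y$.
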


\begin{proof}
	
Let $s$ and $d$ be any two vertices in $V$. $R_1$ returns a direction for every vertex except $d$ (it returns $\emptyset$). The links in the chosen direction always exist (Lemma \ref{lem:link}). If we apply the routing function step by step and proceed through the network in the returned directions, we will find a route. As shown in the proof of livelock-freedom, the route is not infinite (Theorem \ref{thm:LLr1}). Hence, it terminates. Termination can only happen at $d$, by definition. Hence, with the routing function $R_1$, we always find a path from $s$ to $d$.
\end{proof}

\begin{theorem}
	\label{thm:r1}
	$R_1$ is deadlock-free.
\end{theorem}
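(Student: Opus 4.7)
The plan is to invoke \textit{Duato's theorem}. Connectedness of $R_1$ has already been established in Lemma \ref{lem:con1}, so the remaining task is to show that the channel dependency graph (CDG) of $R_1$ is acyclic. I would argue by contradiction: assume a cycle exists in the CDG, and apply Lemma \ref{lem:cyc} to obtain a closed walk $(v_1,a_1,v_2,\dots,v_k,a_k,v_1)$ in the topology digraph in which each $a_{i+1}$ (indices taken cyclically) is directly dependent on $a_i$. Table \ref{tbl:turns}, which classifies the direction pairs of two consecutive directly-dependent arcs under $R_1$, will then be used to rule out every such closed walk.

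The first step is to eliminate vertical arcs from the walk. The $\up$-row of Table \ref{tbl:turns} has its single positive entry at $\up$, so any $\up$-arc forces every subsequent arc to be $\up$; the z-coordinate then strictly increases, which is incompatible with closure. Hence the walk contains no $\up(A)$-arc, and since the net vertical displacement of a closed walk must vanish, no $\down(A)$-arc either. The second step is to eliminate horizontal arcs from the resulting in-layer walk. After removing $\up$ and $\down$ from the set of admissible successors, the $\north$-row of Table \ref{tbl:turns} retains only the $\north$-entry (and symmetrically for the $\south$-row), so any $\north$-arc propagates along the whole walk by cyclicity, contradicting closure in the y-direction. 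The same argument removes every $\south$-arc. With $\north$ and $\south$ also excluded as possible successors, the $\east$- and $\west$-rows each collapse to their diagonal entry, so the walk would consist entirely of $\east$-arcs or entirely of $\west$-arcs and again fail to close in the x-direction. No nontrivial closed walk survives, the CDG is acyclic, and Duato's theorem delivers deadlock freedom of $R_1$.

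The main obstacle will be justifying Table \ref{tbl:turns} rigorously from Definition \ref{def:R_1}. For each of the seven branches of $R_1$, one has to check which directions can actually arise as the next arc in a directly dependent pair given the conditions on $v$ and $d$ enforced by the definition. The analysis is slightly fussy for the $\east$- and $\west$-rows, where three turns remain legal, and it is the only place where a subtle slip in the side conditions $v_z \geq d_z$ or $v_x = d_x$ could introduce an extra possible turn and break the chain of exclusions above. Once the table is confirmed, the combinatorial argument sketched above is essentially mechanical.
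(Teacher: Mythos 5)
Your proposal is correct and follows essentially the same route as the paper's proof: Duato's theorem, connectedness from Lemma~\ref{lem:con1}, Lemma~\ref{lem:cyc} to convert a CDG cycle into a closed walk with directly dependent consecutive arcs, and Table~\ref{tbl:turns} to show that an \emph{up} arc can only be followed by another \emph{up} arc, so no such walk can close across layers. The only (minor) divergence is that for the purely in-layer case the paper simply cites Dally and Seitz's result that XY routing has an acyclic channel dependency graph, whereas you re-derive that fact by eliminating the horizontal directions one at a time from the turn table; both arguments ultimately rest on the same turn table, whose verification from Definition~\ref{def:R_1} you correctly identify as the remaining bookkeeping.
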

\begin{proof}
	$R_1$ is connected, because of Lemma \ref{lem:con1}. Assume, that the CDG of $T$ and $R_1$ has a cycle. Lemma \ref{lem:cyc} proves that T has a cycle where each two consecutive arcs are direct dependent.\\	
	\emph{Case 1: All vertices of the cycle are in the same layer.} We know by \cite{Dally.1987} that XY routing has a cycle free CDG due to impossible turns. Thus, Case 1 does not occur.\\	
	\emph{ Case 2: The vertices of the cycle are in at least two different layers.} Since the vertices are in different layers, there is at least one arc, which goes up. According to table \ref{tbl:turns}, the only possible direction after >>up<< is >>up<< and the cycle could never be closed. Hence, Case 2 is also impossible.\\
	We have shown by contradiction that the CDG is cycle-free and apply Duato's Theorem on $R_1$.
\end{proof}

\subsection{\AlgorithmII: $R_2$ is deadlock-free}

Again, we can determine the set of possible turns. It can be seen in Table \ref{tbl:turns}.

\begin{lemma}
	\label{lem:con2}
	$R_2$ is connected.
\end{lemma}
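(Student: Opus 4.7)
The plan is to parallel the structure of Lemma~\ref{lem:con1}: show that (i) for every pair $(v,d)$ with $v\neq d$ the routing function $R_2$ returns a nonempty direction whose corresponding link exists in $T$, and (ii) iterating $R_2$ from $s$ yields a finite sequence of routers that can only terminate at $d$, since $R_2(v,d)=\emptyset$ holds exactly for $v=d$. Together these produce a directed path from $s$ to $d$.

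First, I would verify that the seven cases of Definition~\ref{def:R_2} cover every $(v,d)$ with $v\neq d$ and that every returned link is present in $T$. The link-existence argument will mirror Lemma~\ref{lem:link}: down-links are available at every router outside the bottommost layer, so I must check that neither down-branch fires at $v_z=\ell$. Setting $\Lambda=\ell$ as agreed in the paper, Eq.~\ref{eq:phi} gives $\phi(\ell,\ell)=\infty$, so the threshold branch is vacuous at the bottom, while the $v_z<d_z$ branch forces $v_z<d_z\leq\ell$. The four horizontal branches only advance in directions in which $d$ lies strictly further, so they do not leave the mesh. The up-branch is triggered only when $v_z>d_z\geq 1$ with $v_x=d_x$ and $v_y=d_y$, and the required upward link then exists by the downward-link premise applied to the router directly above $v$.

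Second, I would establish termination using a lexicographic monovariant $\Psi(v)=(A(v),B(v),C(v))$, where $A(v)$ counts the mandatory down-steps still to perform---i.e.\ the smallest $k\geq 0$ such that $v_z+k\geq d_z$ and $|v_x-d_x|+|v_y-d_y|\leq\Phi(v_z+k,\Lambda)$---$B(v)=|v_x-d_x|+|v_y-d_y|$ is the horizontal Manhattan distance, and $C(v)=|v_z-d_z|$. The quantity $A$ is finite for every $v$ because $\Phi(\ell,\Lambda)=\infty$ serves as a fallback. Each firing of a down-branch strictly decreases $A$ while leaving $B$ untouched; each horizontal branch strictly decreases $B$; the up-branch strictly decreases $C$ once $A=B=0$. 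A standard lex-argument then bounds the path length, and because the only vertex on which $R_2$ returns $\emptyset$ is $d$, the path must end at $d$.

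The main obstacle will be the interaction between the layer-dependent threshold $\Phi$ and the two distinct down-branches. I will have to argue carefully that, once $A$ has been driven to zero by consecutive down-steps, no subsequent horizontal move can reactivate a down-branch. This hinges on two observations I will make explicit: vertical travel preserves $B$, so the sequence of down-steps really converges (at worst to $v_z=\ell$ where the threshold becomes $\infty$); and horizontal travel keeps $v_z$ fixed while only reducing $B$, so the threshold condition, once satisfied at some layer $v_z\geq d_z$, remains satisfied until the up-branch takes over at the destination column and row. Handling these monotonicity claims is the crux; the rest is bookkeeping analogous to the proof of Lemma~\ref{lem:con1}.
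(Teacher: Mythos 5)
Your proposal is correct, and its skeleton matches the paper's: both construct the route by iterating $R_2$ from $s$, both rely on the key facts that down-steps preserve the horizontal offset to $d$ and that the threshold branch is vacuous in layer $\Lambda=\ell$ (where $\phi=\infty$ by Eq.~\ref{eq:phi}), and both reduce the tail of the path to the $R_1$/XYZ situation of Lemma~\ref{lem:con1}. Where you genuinely diverge is in how termination is certified. The paper's proof is a three-case walk-through (source above destination; threshold exceeded; threshold not exceeded) that defers finiteness to the livelock-freedom argument of Theorem~\ref{thm:LLr1} and leans on the informal claim that ``the value of $\Phi(z)$ is the same for all $z<\Lambda$'' to conclude that the down-branch, once triggered, keeps firing until layer $\Lambda$ is reached. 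Your lexicographic monovariant $\Psi=(A,B,C)$ makes this self-contained and slightly more robust: defining $A(v)$ as the least $k$ with $v_z+k\geq d_z$ and the threshold satisfied at layer $v_z+k$ does not require $\Phi$ to be layer-independent, only that $\Phi(\ell,\Lambda)$ exceeds any chip dimension, and the shift argument $A(\down(v))=A(v)-1$ handles both down-branches of Definition~\ref{def:R_2} uniformly. The price is that you must (as you note) verify the link-existence and case-coverage bookkeeping explicitly rather than inheriting it wholesale from Lemma~\ref{lem:link}; the benefit is a termination proof that does not circularly depend on a later theorem and that survives heterogeneous per-layer thresholds.
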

\begin{proof}
	Let $s$ and $d$ be any two vertices in $V$. We construct a path $(s=v_{1},\dots,v_{k}=d)$ with $\ v_{i} \in V$ for all $i \in [k],\ k \in \mathbb{N}$ from $s$ to $d$ by using links $\left(c_{1} = \left(v_1, v_2\right),\dots, c_{k-1} = \left(v_{k-1}, v_k\right)\right)$  with $c_{i} \in A$ for all $i \in [k-1]$, which are consecutively delivered by the routing function $R_{2}$. 
	
	\emph{Case 1 (The source is above the destination ${s_{z} < d_{z}}$)}:	As in the proof of Lemma \ref{lem:con1}, the route starts with a sequence of $\down$s until the destination layer is reached. Now the routing goes as explained in Case 2.
	
	\emph{Case 2 (The source is below the destination or on the same layer ${s_{z} \geq d_{z}}$)}: The next links depend on the logical value of $||s-d|| \geq \Phi(s_z).$
	
	\emph{Case 2.1 ($||s-d|| \geq \Phi(s_z)$)}:	If the condition is true, the next link will be $\down$. The value of $||s-d||$ is the same as $||v_2-d||$. The value of $\Phi(z)$ is the same for all $z< \Lambda$. Hence, layer $\Lambda$ will be reached via a sequence of $\down$s. The rest of the path is constructed as in Case 2.2.
	
	\emph{Case 2.2 ($||s-d|| < \Phi(s_z)$)}:
	Here, $R_2$ is identical to $R_1$. Connectivity is proven in Lemma \ref{lem:con1}.
\end{proof}

\begin{theorem}
	$R_2$ is deadlock-free.
\end{theorem}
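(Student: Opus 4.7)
The plan is to mirror the proof of Theorem \ref{thm:r1}, since Table \ref{tbl:turns} declares the same set of possible turns for both $R_1$ and $R_2$. First, I would invoke Duato's theorem: by Lemma \ref{lem:con2}, $R_2$ is connected, so it remains to show that the channel dependency graph (CDG) of $R_2$ on $T$ is acyclic. Arguing by contradiction, I would assume a cycle in the CDG and apply Lemma \ref{lem:cyc} to obtain a closed walk $(v_1, a_1, \dots, v_k, a_k, v_1)$ in $T$ in which every arc is direct dependent on its cyclic predecessor.

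Before the case analysis I would verify that Table \ref{tbl:turns} really does apply to $R_2$ by inspecting Definition \ref{def:R_2}. The new subtlety compared to $R_1$ is that $\down$ arcs are now emitted in two disjoint situations: when $v_z < d_z$ (approaching the destination layer), and when the detour condition $|v_x-d_x|+|v_y-d_y| > \Phi(v_z, \Lambda)$ holds. The key check is that no direct dependency of the form $\down \to \up$ can arise: an $\up$ arc requires $w_x = d_x$ and $w_y = d_y$ at the router entering it, but a $\down$ arc preserves both $x$ and $y$, so the router before the $\down$ already satisfies the XY-alignment condition; neither of the two $\down$ triggers is compatible with this alignment. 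The remaining entries of the table are either standard XY turns within a single layer or $\down$-to-XY transitions, all of which can be read off directly from the definition.

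Once the turn table is validated, the case analysis follows the template of Theorem \ref{thm:r1}. In Case 1, the closed walk is confined to a single layer, so it lies inside the CDG of pure XY routing, which is acyclic by \cite{Dally.1987}. In Case 2, the walk spans at least two layers, and since its net vertical displacement along a closed walk vanishes, it must contain at least one $\up$ arc. Table \ref{tbl:turns} then forces every successor of an $\up$ arc to also be $\up$, and applying this cyclically forces every $a_i$ in the walk to be $\up$. But then each arc contributes $-1$ to the vertical displacement, so the total equals $-k \neq 0$, contradicting closedness. Both cases yield a contradiction, so the CDG is acyclic, and Duato's theorem concludes deadlock freedom.

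The main obstacle I anticipate is the bookkeeping in the turn-table verification: unlike $R_1$, the routing function $R_2$ emits $\down$ from two different clauses of Definition \ref{def:R_2}, and one must be careful to cover both when ruling out the critical $\down \to \up$ transition as well as when checking that every other permitted transition lies in Table \ref{tbl:turns}. Once this technical step is handled, the remainder of the argument transfers to $R_2$ essentially verbatim from the proof of $R_1$.
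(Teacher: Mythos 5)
Your proposal is correct and follows essentially the same route as the paper: Duato's theorem with connectedness from Lemma~\ref{lem:con2}, Lemma~\ref{lem:cyc} to convert a CDG cycle into a closed walk, and the two-case analysis in which the \fup-only-followed-by-\fup\ entry of Table~\ref{tbl:turns} rules out any multi-layer cycle. Your extra verification that the two distinct $\down$ clauses of Definition~\ref{def:R_2} cannot produce a $\down\to\up$ dependency is a worthwhile elaboration of what the paper leaves implicit in the turn table (though for the $v_z<d_z$ clause the decisive obstruction is the layer comparison rather than XY-alignment).
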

\begin{proof}
	The proof is analog to the proof of Theorem~\ref{thm:r1}. $R_2$ is connected because of Lemma~\ref{lem:con2}. We assume that the CDG of T and $R_2$ has a cycle. Then T has a cycle, in which each two consecutive arcs are direct dependent, according to Lemma~\ref{lem:cyc}.\\
\emph{Case 1: All vertices of the cycle are in the same layer.} Case does not occur, cp.~Theorem~\ref{thm:r1}, \emph{Case 1}.\\
\emph{Case 2: The vertices of the cycle are in at least two different layers.} There is at least one arc going up. According to table \ref{tbl:turns}, the only possible direction after >>up<< is >>up<<. Thus, the cycle can not be closed. Hence, case 2 is impossible.\\
We have shown by contradiction that the CDG is cycle-free. We apply Duato's Theorem on $R_2$.
\end{proof}

\subsection{Livelock freedom}

\emph{Palesi et al.} \cite{Palesi.2014} define that ``livelock is a condition where a packet keeps circulating within the network without ever reaching its destination''. Hence the following definition.
\begin{definition}[Livelock-free] A routing algorithm is livelock-free, if every packet has no other choice, but to reach its destination after a finite number of hops.
\end{definition}
Remark.	A routing algorithm consists of a routing function and a selection. $R_1$ and $R_2$ are examples for routing functions. If an adaptive routing function returns more than one link, the selection chooses one. The property \emph{livelock-free} belongs to the routing algorithm. Nevertheless, we call a routing function \emph{livelock-free} if, independent of the selection, every routing algorithm with this routing function is livelock-free.

\begin{theorem} \label{thm:LLr1}
	$R_1$ and $R_2$ are livelock-free.
\end{theorem}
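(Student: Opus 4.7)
The plan is to exhibit, for each routing function, a nonnegative integer potential that strictly decreases along every route the function can produce. Since such a potential is bounded below by zero, each route has length at most its initial potential value and therefore terminates in finitely many hops, which matches the definition of livelock-freedom given above.

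For $R_1$ I would take the 3D Manhattan distance $\psi_1(v) := |v_x - d_x| + |v_y - d_y| + |v_z - d_z|$ to the fixed destination $d$. Going through the six cases of Definition~\ref{def:R_1}, each returned direction is gated by a condition that forces one particular coordinate deviation to be strictly positive and reduces that deviation by exactly one without touching the other two; e.g., $\north$ fires only when $v_y > d_y$ and decreases $v_y$ by one. The required link exists by Lemma~\ref{lem:link}, so $\psi_1$ drops by exactly one per hop and the route terminates after at most $\psi_1(s)$ hops from any source $s$.

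For $R_2$ the argument is more delicate because the detour rule $\{\down(v)\}$ that fires when $|v_x - d_x| + |v_y - d_y| > \Phi(v_z, \Lambda)$ and $v_z \geq d_z$ increases $v_z$ even when $v_z \geq d_z$ already, so $\psi_1$ need not decrease. I would partition any trajectory into three phases. Phase A has $v_z < d_z$: only the rule $\{\down(v)\}$ for $v_z < d_z$ fires and $d_z - v_z$ strictly decreases, so Phase A ends in at most $d_z - s_z$ hops. Phase B has $v_z \geq d_z$ together with horizontal distance exceeding $\Phi(v_z, \Lambda)$: only the detour rule fires and $v_z$ strictly increases toward $\Lambda$; because $\phi(\Lambda, \Lambda) = \infty$ by Eq.~\ref{eq:phi}, the threshold condition must fail no later than $v_z = \Lambda$, so Phase B terminates in at most $\Lambda - v_z$ hops. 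Phase C has $v_z \geq d_z$ with horizontal distance at most $\Phi(v_z, \Lambda)$, and on this subgrid $R_2$ coincides with $R_1$ restricted to $v_z \geq d_z$, so $\psi_1$ decreases by one per hop as in the $R_1$ case.

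The main obstacle will be showing that once Phase C is entered, the trajectory cannot fall back into Phase B, since otherwise the phases could alternate and we would have no overall finite bound. The key observation is that in Phase C the only horizontal moves strictly decrease $|v_x - d_x| + |v_y - d_y|$ while leaving $v_z$ fixed, and the $\up$ move is gated by $v_x = d_x$ and $v_y = d_y$, so it fires only when the horizontal distance is already zero. Hence after entering Phase C the horizontal distance is monotonically nonincreasing, so it can never again exceed $\Phi(v_z, \Lambda)$, independent of how $\Phi$ changes with the shifting $v_z$. This yields the phase ordering A $\rightarrow$ B $\rightarrow$ C $\rightarrow$ termination (with A and B optional), and the overall hop count is the finite sum of the three phase bounds.
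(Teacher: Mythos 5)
Your argument is correct, but it is structured quite differently from the paper's. The paper proves livelock freedom by contradiction: it assumes an infinite route, observes that some cardinal direction must then be traveled infinitely often, and rules out each direction case by case (\emph{up} is bounded because only \emph{up} can follow \emph{up} and there are $\ell$ layers; \emph{east}/\emph{west} cannot recur infinitely because the route would have to cross column $d_x$, at which point vertical routing takes over; etc.), and then simply asserts that ``the same arguments hold for $R_2$.'' You instead construct an explicit well-founded measure: the $3$D Manhattan distance to $d$, which decreases by exactly one per hop under $R_1$, and for $R_2$ a three-phase decomposition (descend to $d_z$; detour down to $\Lambda$; then $R_1$-like behavior) with a per-phase variant. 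What your route buys is twofold: it yields explicit hop-count bounds rather than mere finiteness, and --- more importantly --- it forces you to confront the one genuinely delicate point for $R_2$, namely that the trajectory cannot oscillate between the detour phase and the XYZ phase. Your observation that horizontal distance is nonincreasing in Phase C and that the only move changing $v_z$ there (\emph{up}) is gated on horizontal distance zero, so the detour condition $|v_x-d_x|+|v_y-d_y|>\Phi(v_z,\Lambda)$ can never be re-triggered regardless of how $\Phi$ varies with $v_z$, is exactly the content the paper's terse ``same arguments'' remark leaves implicit; likewise your use of $\phi(\Lambda,\Lambda)=\infty$ to guarantee Phase B halts at the bottommost layer. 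The paper's case analysis is shorter for $R_1$ but weaker as written for $R_2$; your version is the more self-contained of the two, at the cost of having to verify (via Lemma~\ref{lem:link} and the layer-ordering assumption) that every prescribed link exists, which both proofs ultimately rely on.
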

\begin{proof}
	Assume there were two vertices $s$ and $d$ with the property that the routing $R_1$ makes infinite steps and never reaches $d$ starting from $s$ (the same arguments hold for $R_2$).
	Under this assumption, at least one cardinal direction must be traveled infinite times. We do a case-by-case analysis in which we assume that this applies  to the different cardinal directions. We thereby show that it works for none of them. This contradicts the assumption that there could be a livelock.
	
	\emph{Case 1: \fup\ is traveled infinite times.}
	By the definition of $R_1$ (Definition~\ref{def:R_1}), \tup\ is only used if $v_x = d_x$ and $v_y = d_y$ and $v_z > d_z$, with $v$ being the current vertex. Traveling \tup\ one layer will remain $v_x = d_x$ and $v_y = d_y$ and results either in $v_z = d_z$ or $v_z > d_z$. The only possible direction after \fup\ is \fup. Since there are only $\ell < \infty$ layers, $d$ will be reached after finite steps. Thus, Case 1 can not occur.
	
	\emph{Case 2: \fdown\ is traveled infinite times.}
	Since \tup\ can not be traveled infinite times (Case 1), \tdown\ can not either. It is limited by the layers count, $\ell$, plus the number of times \tup\ could be traveled.
		
	\emph{Case 3: \feast\ and \fwest\ are traveled infinite times.}
		Similar to Case 2, infinite steps to \twest\ imply infinite steps to \teast\ and vice versa. From the definition of $R_1$, we know:
		\begin{itemize}[noitemsep,topsep=0pt, label={--}]
			\item \teast\ and \twest\ are the only directions, which affect the $x$-value of $v$.
			\item A step to \teast\ is only done if $v_x < d_x$
			\item A step to \twest\ is only done if $v_x > d_x$
			\item A step to \twest\ or \teast\ is only done if $v_z \geq d_z$.
		\end{itemize}
		We never step on a router with $v_x = d_x$. If we reached a router with $v_x = d_x$, \tup- or \tdown-routing would be done and the destination would be reached. Steps to \teast\ or \twest\ are only done in the destination layer or below. In these layers, each row has a router at position $d_x$. Routing from west to east and back without using one of these routers is impossible.
		
	\emph{Case 4: \fnorth\ and \fsouth\ are traveled infinite times.}
	This case is analog to Case 3.
	
	None of the cases occur. Thus, the assumption is wrong. $R_1$ is livelock-free.
	
	The same arguments hold for $R_2$ (defined in Definition~\ref{def:R_2}). $R_2$ is livelock-free.
	
	Remark: The proof relies on our special setting. It requires that for $u$ and $v$ with $\down(u) = v$ it holds: $\up(v) = u$, $u_x = v_x$, and $u_y = v_y$. It also requires the mesh topology in layers.
\end{proof}

\section{Tackling throughput via router architectures}\label{sec:architectures}
We have shown a fundamental limitation in heterogeneous routing paths using standard techniques in Sec.~\ref{sec:throughput}: Throughput is limited by the slowest clock along a packet's path, or in other words, \textit{the chain is as strong as its weakest link}. This is not an issue for 2D or homogeneous 3D systems, since the deviation of clocks is rather small there. In heterogeneous 3D SoCs, in contrast, this poses a severe limitation, since clocks potentially deviate by a large factor. This limitation, previously unexplored, is revealed by this paper. To solve this issue in combination with the proposed routing strategies, we propose to use a novel router mircoarchitecture. Thereby, we assume an integer relation between the clock frequencies $c_f$  with a constant phase shift.
Our architecture exploits the observation that optimized routing algorithms must minimize horizontal transmission in slower layers.
With our proposed routing strategies, horizontal transmissions are always conducted in the fastest layer along the path. 
Thus, for heterogeneous packet-paths, packets are directly routed from local ports of a router to the port in direction of the faster layer (down). In the opposite direction, from downwards, packets can only be routed to the upward port or the local port for ejection. The architecture enables a small part of the router in the slower layers, comprising the local and vertical ports, to communicate multiple flits in parallel in order to provide the same throughput between the local and the vertical ports as faster routers from digital layers. Thereby, heterogeneous packet-paths are traversed with the throughput the standard router in the fastest technology provides. We refer to our new architecture as \emph{high vertical-throughput router}.   

\subsection{High vertical-throughput router design}

As previously outlined, the router architecture in the slower layers has to be modified, using parallelism, to obtain a higher throughput between the local and the vertical ports. 
In the fast layers, only the vertical links towards the slower layers need to be modified (see below). Our new router architecture exploits that processing elements, connected to the local ports, are able to provide multiple parallel flits, since packet transmission is initialized for full packets.  A conventional input buffered 3D router design, with link width of $N$, is modified as shown orange, in Fig.~\ref{fig:routerArchitecture}.
\begin{figure}
	\begin{minipage}[b]{0.48\linewidth}
          \centering
          \scalebox{1.0}{
		\begin{tikzpicture}[scale=.3, yscale=-1]

\draw (12, 12) -- ++ (2,0)-- ++ (2,2)-- ++ (0,2) -- ++ (-2,2)-- ++ (-2,0)-- ++ (-2,-2)-- ++ (0,-2)-- ++ (2,-2);
\draw[col2] (12,12.5) -- ++(.5,0) -- ++(1,2) -- ++ (.5,0);
\draw[col2] (12,14.5) -- ++(.5,0) -- ++(1,-2) -- ++ (.5,0);
\node[desc, col2, align=center] (cb) at (13,16) {modified\\crossbar};

\coordinate (north) at  (12.5,9);
\draw[latex-] (north) --++(0,3) node[midway, left, desc, xshift = 2]{$N$};
\draw[-latex] ($(north)+(1,0)$) --++(0,1);
\draw[-latex] ($(north)+(1,0)$) --++(0,3) node[midway, right, desc, xshift=1]{$N$};
\draw[fill=white] ($(north)+(0.5,1)$) rectangle ++(1,.25);
\draw[fill=white] ($(north)+(0.5,1.25)$) rectangle ++(1,.25);
\draw[fill=white] ($(north)+(0.5,1.5)$) rectangle ++(1,.25);
\draw[fill=white] ($(north)+(0.5,1.75)$) rectangle ++(1,.25);
\node[desc, anchor =west] (northdesc) at ($(north)+(1, 0)$) {north};

\coordinate (south) at  (13.5,21);
\begin{scope}[rotate=180]
\draw[latex-] (south) --++(0,3) node[midway, right, desc, xshift = -1]{$N$};
\draw[-latex] ($(south)+(1,0)$) --++(0,1);
\draw[-latex] ($(south)+(1,0)$) --++(0,3) node[midway, left, desc, xshift=-2]{$N$};
\draw[fill=white] ($(south)+(0.5,1)$) rectangle ++(1,.25);
\draw[fill=white] ($(south)+(0.5,1.25)$) rectangle ++(1,.25);
\draw[fill=white] ($(south)+(0.5,1.5)$) rectangle ++(1,.25);
\draw[fill=white] ($(south)+(0.5,1.75)$) rectangle ++(1,.25);
\node[desc, anchor =east] (northdesc) at ($(south)+(1, 0)$) {south};
\end{scope}

\coordinate (south) at  (7,15.5);
\begin{scope}[rotate=-90]
\draw[latex-] (south) --++(0,3) node[midway, below, desc, yshift = 1]{$N$};
\draw[-latex] ($(south)+(1,0)$) --++(0,1);
\draw[-latex] ($(south)+(1,0)$) --++(0,3) node[midway, above, desc, yshift=2]{$N$};
\draw[fill=white] ($(south)+(0.5,1)$) rectangle ++(1,.25);
\draw[fill=white] ($(south)+(0.5,1.25)$) rectangle ++(1,.25);
\draw[fill=white] ($(south)+(0.5,1.5)$) rectangle ++(1,.25);
\draw[fill=white] ($(south)+(0.5,1.75)$) rectangle ++(1,.25);
\node[desc, anchor =west, rotate=90] (northdesc) at ($(south)+(1, 0)$) {west};
\end{scope}

\coordinate (south) at  (19,14.5);
\begin{scope}[rotate=90]
\draw[latex-] (south) --++(0,3) node[midway, above, desc, yshift = -1]{$N$};
\draw[-latex] ($(south)+(1,0)$) --++(0,1);
\draw[-latex] ($(south)+(1,0)$) --++(0,3) node[midway, below, desc, yshift=-2]{$N$};
\draw[fill=white] ($(south)+(0.5,1)$) rectangle ++(1,.25);
\draw[fill=white] ($(south)+(0.5,1.25)$) rectangle ++(1,.25);
\draw[fill=white] ($(south)+(0.5,1.5)$) rectangle ++(1,.25);
\draw[fill=white] ($(south)+(0.5,1.75)$) rectangle ++(1,.25);
\node[desc, anchor =east, rotate=90] (northdesc) at ($(south)+(1, 0)$) {east};
\end{scope}

\coordinate (south) at  (15.5,16.5);
\begin{scope}[rotate=-45, col2, xscale = -1]
\draw[-latex] (south) --++(0,3) node[midway, above, desc, yshift = -3, xshift = -1, rotate =-45]{$c_fN$};
\draw[latex-] ($(south)+(1,2)$) --++(0,1);
\draw[latex-] ($(south)+(1,0)$) --++(0,3) node[midway, below, desc, xshift = 2,yshift=-3, rotate=-45]{$c_fN$};
\draw[fill=white] ($(south)+(0.5,1)$) rectangle ++(1,.25);
\draw[fill=white] ($(south)+(0.5,1.25)$) rectangle ++(1,.25);
\draw[fill=white] ($(south)+(0.5,1.5)$) rectangle ++(1,.25);
\draw[fill=white] ($(south)+(0.5,1.75)$) rectangle ++(1,.25);
\node[desc, anchor =east, rotate=45] (northdesc) at ($(south)+(1, 3.3)$) {local};
\end{scope}

\coordinate (south) at  (14.5,12.5);
\begin{scope}[rotate=45, col2, xscale =1, yscale = -1]
\draw[-latex] (south) --++(0,3) node[midway, above, desc, yshift = -3, xshift = 1, rotate =45]{$c_fN$};
\draw[latex-] ($(south)+(1,2)$) --++(0,1);
\draw[latex-] ($(south)+(1,0)$) --++(0,3) node[midway, below, desc, xshift = 1,yshift=-2, rotate=45]{$c_fN$};
\draw[fill=white] ($(south)+(0.5,1)$) rectangle ++(1,.25);
\draw[fill=white] ($(south)+(0.5,1.25)$) rectangle ++(1,.25);
\draw[fill=white] ($(south)+(0.5,1.5)$) rectangle ++(1,.25);
\draw[fill=white] ($(south)+(0.5,1.75)$) rectangle ++(1,.25);
\node[desc, anchor =west, rotate=-45] (northdesc) at ($(south)+(1, 3.)$) {up};
\end{scope}

\coordinate (south) at  (11.5,17.5);
\begin{scope}[rotate=45, col2, xscale =-1, yscale = 1]
\draw[-latex] (south) --++(0,3) node[midway, above, desc, yshift = 6, xshift = -8, rotate =45]{$c_fN$};
\draw[latex-] ($(south)+(1,0)$) --++(0,1);
\draw[latex-] ($(south)+(1,0)$) --++(0,3) node[midway, below, desc, xshift = 3,yshift=-6, rotate=45]{$c_fN$};
\draw[fill=white] ($(south)+(0.5,1)$) rectangle ++(1,.25);
\draw[fill=white] ($(south)+(0.5,1.25)$) rectangle ++(1,.25);
\draw[fill=white] ($(south)+(0.5,1.5)$) rectangle ++(1,.25);
\draw[fill=white] ($(south)+(0.5,1.75)$) rectangle ++(1,.25);
\node[desc, anchor =east, rotate=-45] (northdesc) at ($(south)+(1, 3.0)$) {down};
\end{scope}

\end{tikzpicture}
	
		\caption{Modified router architecture with support for higher vertical throughput. The link width is $N$, and
                  $c_f$ the clock scaling factor of the current layer compared to the fastest layer.}
		\label{fig:routerArchitecture}
	\end{minipage}
	\hfill
	\begin{minipage}[b]{0.48\linewidth}
          \centering
          \scalebox{0.8}{
%
%

\begin{tikzpicture}[scale=.35, yscale=-1]
\coordinate (u1) at (3, 6);
\draw [fill = white]($(u1)+(0,-6)$) rectangle +(2, 8);
\draw ($(u1) + (0,1)$) -- ($(u1) + (2,1)$);
\draw ($(u1) + (0,0)$) -- ($(u1) + (2,0)$);
\draw ($(u1) + (0,-1)$) -- ($(u1) + (2,-1)$);
\draw ($(u1) + (0,-2)$) -- ($(u1) + (2,-2)$);
\draw ($(u1) + (0,-3)$) -- ($(u1) + (2,-3)$);
\draw ($(u1) + (0,-4)$) -- ($(u1) + (2,-4)$);
\draw ($(u1) + (0,-5)$) -- ($(u1) + (2,-5)$);
\node [desc] at ($(u1) + (1, 1.5)$) (b2) {$1$};
\node [desc] at ($(u1) + (1, 0.2)$) (b2) {$\vdots$};
\node [desc] at ($(u1) + (1, -0.5)$) (b2) {$c_f$};
\node [desc] at ($(u1) + (1, -3.5)$) (b2) {$1$};
\node [desc] at ($(u1) + (1, -4.8)$) (b2) {$\vdots$};
\node [desc] at ($(u1) + (1, -5.5)$) (b2) {$c_f$};
\draw[latex-] ($(u1) + (0,-5.5)$) -- ($(u1) + (-1.5,-5.5)$);
\node [desc] at ($(u1) + (-1, -4.8)$) (b2) {$\vdots$};
\draw[latex-] ($(u1) + (0,-3.5)$) -- ($(u1) + (-1.5,-3.5)$);
\node[circle, fill, inner sep=1.1pt] at($(u1)+(-1.5,-5.5)$) (b31) {};
\node[circle, fill, inner sep=1.1pt] at($(u1)+(-1.5,-3.5)$) (b31) {};
\draw[thick] ($(u1) + (-1.5,-7.5)$) -- ($(u1) + (-1.5,-3.5)$);
\node at ($(u1) + (-1.5,-8.25)$) {\small In};
\node at ($(u1) + (-2.5,-6.5)$) {\small $c_fN$};
\node at ($(u1) + (-0.75,-6)$) {\small $N$};
\draw[latex-] ($(u1) + (3.5,4.5)$) |- ($(u1) + (2,1.5)$);
\node [desc] at ($(u1) + (2.75, 0.2)$) (b2) {$\vdots$};
\draw[latex-] ($(u1) + (6,-0.5)$) -- ($(u1) + (2,-0.5)$);
\draw[latex-,thick] ($(u1) + (6,4.5)$) -- ($(u1) + (6,-0.5)$);
\draw[latex-] ($(u1) + (6,1.5)$) -- ($(u1) + (2,1.5)$);
\node[circle, fill, inner sep=1.1pt] at($(u1)+(6,-0.45)$) (b31) {};
\node[circle, fill, inner sep=1.1pt] at($(u1)+(6,1.55)$) (b31) {};
\node at ($(u1) + (6.15,5.25)$) {\small Out$_P$};
\node at ($(u1) + (3.5,5.25)$) {\small Out$_S$};
\node at ($(u1) + (7.25,2.5)$) {\small $c_fN$};
\node at ($(u1) + (3.5,-1)$) {\small $N$};
\node at ($(u1) + (3,2.5)$) {\small $N$};


\end{tikzpicture}
	
		\caption{Modified input buffer. $c_f$ flits can be read and written at once.}
		\label{fig:mod_input_buffs}
                \vspace{10pt}
	\end{minipage}
\end{figure}
\begin{figure}
	\centering
	                \scalebox{0.8}{
                  \newcommand{\inputArrow}[4]{\coordinate (tmpArrowLower) at (#1,#2);
	\coordinate (tmpArrowUpper) at ($(tmpArrowLower)+(0,-1.5)$);
	\draw[latex-] (tmpArrowLower) --  (tmpArrowUpper)  node [midway, right, desc] {#4};
	\node[above = 0pt of tmpArrowUpper, desc] {#3};}
\newcommand{\outputArrow}[4]{\coordinate (tmpArrowUpper) at (#1,#2);
	\coordinate (tmpArrowLower) at ($(tmpArrowUpper)+(0,6)$);
	\draw[-latex] (tmpArrowUpper) node [below right, desc] {#4}--  (tmpArrowLower);
	\node[below = 0pt of tmpArrowLower, desc] {#3};}

\begin{tikzpicture}[scale=.295, yscale=-1]

\coordinate (rec2ul) at (11+1, 6);
\coordinate (rec2lr) at (30+1,12);
\draw [fill = white] (rec2ul) rectangle (rec2lr);
\draw (19.5+1, 7) -- (20+1, 7) -- (21+1, 10) -- (21.5+1, 10);
\draw (21.5+1, 7) -- (21+1, 7) -- (20+1, 10) -- (19.5+1, 10);
\node [desc, align=center]  at (12/2+31/2, 11) (c1) {$N$-bit crossbar};

\coordinate (rec1ul) at (-2, 6);
\coordinate (rec1lr) at (10,12);
\draw [fill = white] (rec1ul) rectangle (rec1lr);
\draw (4, 7) -- (4.5, 7) -- (5.5, 10) -- (6, 10);
\draw (6, 7) -- (5.5, 7) -- (4.5, 10) -- (4, 10);
\node [desc]  at (-1+5 , 11) (c1) {$(c_f-1)N$-bit crossbar};


\inputArrow{0}{6} {Down}{\scalebox{.8}[1.0]{$(c_f\!-\!1)N$}}
\inputArrow{4}{6} {Up}{\scalebox{.8}[1.0]{$(c_f\!-\!1)N$}}
\inputArrow{8}{6} {Local}{\scalebox{.8}[1.0]{$(c_f\!-\!1)N$}}

\inputArrow{13+1}{6} {Down}{$N$}
\inputArrow{15.5+1}{6} {Up}{$N$}
\inputArrow{18+1}{6} {Local}{$N$}
\inputArrow{20.5+1}{6} {East}{$N$}
\inputArrow{23+1}{6} {West}{$N$}
\inputArrow{25.5+1}{6} {South}{$N$}
\inputArrow{28+1}{6} {North}{$N$}

\outputArrow{-0}{12}{Down}{\scalebox{.8}[1.0]{$(c_f\!-\!1)N$}}
\outputArrow{4}{12}{Up}{\scalebox{.8}[1.0]{$(c_f\!-\!1)N$}}
\outputArrow{8}{12}{Local}{\scalebox{.8}[1.0]{$(c_f\!-\!1)N$}}
\outputArrow{20.5+1}{12}{East}{$N$}
\outputArrow{23+1}{12}{West}{$N$}
\outputArrow{25.5+1}{12}{South}{$N$}
\outputArrow{28+1}{12}{North}{$N$}

\draw[-latex]  (13+1, 12) node [below right, desc] {$N$} -- ++(0,2) -- ++(-13.5,0) -- ++(0,4);
\draw[-latex]  (15.5+1, 12) node [below right, desc] {$N$} -- ++(0,3) -- ++(-12,0) -- ++(0,3);
\draw[-latex]  (18+1, 12) node [below right, desc] {$N$} -- ++(0,4) -- ++(-10.5,0) -- ++(0,2);

%
%

\end{tikzpicture}
	
                \caption{Modified crossbar which allows to route $c_f$ flits between the local and the vertical ports.}
		\label{fig:mod_crossbar}

\end{figure}
The input-buffers (see Fig.~\ref{fig:mod_input_buffs}) of the vertical and local ports can read up to $c_f$ flits of $N$ bits simultaneously. A single or $c_f$ flits are inputed to the crossbar, which increases the bit width of the connection by factor $c_f$.
The crossbar is also modified (see Fig.~\ref{fig:mod_crossbar}). Firstly, due to the proposed routing strategies, some turns (e.g.\ down to north, east, west or south) cannot occur. Secondly, the crossbar has to be extended to route $c_f$ flits between local and vertical ports. In paths which do not include the fastest layer, horizontal routes via a slower layer cannot be avoided (still the fastest among all included ones is chosen). In this scenario, routes of single flits from the horizontal ports towards the up or local output ports occur. All remaining $(c_f-1)N$ lines of the crossbar output are zero and only one flit can be written to the local port, or the input port of the overlying router, per cycle.

However, in the most common 3D NoC scenario with only one slower (mixed-signal) layer located at the top, the complexity of the proposed router architecture is reduced drastically for two reasons.
Firstly, the modified routers at the top have no up port. This results in only tree ports, local, up and down, requiring a high-throughput connection. Thereby, the $(c_f-1)N$-bit crossbar  shown in Fig.~\ref{fig:mod_crossbar} is added to the design; it has only three input and output ports. Thus, the local input port is directly connected to the downwards output port and vice versa, which does not incur any hardware cost. Furthermore, only two input buffers (local and down) need to be modified, which again reduces the hardware complexity. 
Secondly, all heterogeneous packets path will include a fast layer, thus routes of single flits from/to the downwards input ports will not occur. This again reduces the complexity of the input buffer as it only needs to send and/or receive $c_f$ parallel flits and never single flits.

\subsection{High vertical-throughput links}

The vertical links must support the higher throughput of the modified routers.
$c_f$ flits are transmitted in parallel employing a large MIV array. (A large TSV array can also be implemented in case of if non-monolithic 3D integration.) 
On the way from a slower to a faster layer, data is transmitted in parallel with the slower clock frequency via the MIV array. The modified input buffer in the faster technology fetches the $c_f$ flits in parallel with a rate equal to the clock speed of the slower layer. If data are transmitted to a slower layer, the data is first parallelized  in the faster layer using a shift register. The full content of the $c_fN$-bit shift register is transmitted via the wide MIV array to the slower layer, where the flits are fetched in parallel by the modified input buffer. This is shown in Fig.~\ref{fig:commUpMany}. The inverse path from the slower layer to the faster layer is shown in Fig.~\ref{fig:commDownMIV}. The architecture is analogous; Flits are transmitted in parallel from the slower layer and serialized using a shift register in the faster layer.

\begin{figure}
	\begin{minipage}[b]{.45\linewidth}
          \centering
          \scalebox{0.8}{
%
%

\begin{tikzpicture}[scale=.35, yscale=-1]


\coordinate (u1) at (2, -1);
\draw [fill = white](u1) rectangle +(2, 4);
\draw ($(u1) + (0,1)$) -- ($(u1) + (2,1)$);
\draw ($(u1) + (0,2)$) -- ($(u1) + (2,2)$);
\node [desc] at ($(u1) + (1, 1.5)$) (b2) {$1$};
\node [desc] at ($(u1) + (1, 2.25)$) (b2) {$\vdots$};
\node [desc] at ($(u1) + (1, 3.55)$) (b2) {$c_f$};
\draw ($(u1) + (0,3)$) -- ($(u1) + (2,3)$);
\coordinate (t1) at (3, 4);
\draw [-latex] ($(t1) + (0, -.1)$) -- ($(t1) + (0, -.9)$);
\draw ($(t1) + (0,0.3)$) circle (.15 and 0.1);
\draw (t1) -- +(0,0.3);
\draw ($(t1) + (-.15, .3)$) -- + (0,.5);
\draw ($(t1) + (.15, .3)$) -- + (0,.5);
\begin{scope}
\clip ($(t1) + (-.2, .8)$) rectangle ($(t1) + (.2, 9)$);
\draw ($(t1) + (0,0.8)$) circle (.15 and 0.1);
\path ($(t1) + (-.15, .8)$) -- ($(t1) + (.15, .8)$);
\end{scope}
\draw ($(t1) + (0,0.9)$) -- +(0,0.2);
\draw ($(t1) + (-.35, .37)$) -- + (.7,.4);
\node [desc, anchor = west] at ($(t1) + (.15, .6)$) (td1) {$N$};
\draw [latex-] ($(t1) + (0, 1.2)$) -- ($(t1) + (0, 1.9)$);
\begin{scope}[xshift=-90]
\coordinate (t1) at (3, 4);
\draw [-latex] ($(t1) + (0, -.1)$) |- ($(t1) + (2.25, -3.5)$);
\draw ($(t1) + (0,0.3)$) circle (.15 and 0.1);
\draw (t1) -- +(0,0.3);
\draw ($(t1) + (-.15, .3)$) -- + (0,.5);
\draw ($(t1) + (.15, .3)$) -- + (0,.5);
\begin{scope}
\clip ($(t1) + (-.2, .8)$) rectangle ($(t1) + (.2, 9)$);
\draw ($(t1) + (0,0.8)$) circle (.15 and 0.1);
\path ($(t1) + (-.15, .8)$) -- ($(t1) + (.15, .8)$);
\end{scope}
\draw ($(t1) + (0,0.9)$) -- +(0,0.2);
\draw ($(t1) + (-.35, .37)$) -- + (.7,.4);
\node [desc, anchor = west] at ($(t1) + (.15, .6)$) (td1) {$N$};
\draw [latex-] ($(t1) + (0, 1.2)$) -- ($(t1) + (0, 1.9)$);
\end{scope}
\node [desc, anchor = west] at ($(t1) + (1.15, .6)$) (td1) {$\dots$};
\coordinate (u1) at (-0.75, 6);
\draw [fill = lightgray](u1) rectangle +(4.25, 1);
\draw [fill = lightgray](u1) rectangle +(3, 1);
\draw [fill = lightgray](u1) rectangle +(1.25, 1);
\node[desc] at ($(u1)+(0.6,0.5)$) {$1$};
\node[desc] at ($(u1)+(2.3,0.5)$) {$\dots$};
\node[desc] at ($(u1)+(3.6,0.5)$) {$c_f$};
\node [desc,gray] at ($(u1) + (2.5, 1.75)$) (b2) {$c_fN$-bit shift-in reg.};
\draw [latex-] ($(u1) + (4.25, 0.5)$) -| ($(u1) + (8.25, 2.5)$);
\node [desc] at ($(u1) + (8.25, 3.25)$) (b2) {router out};
\node [desc] at ($(u1) + (8.25, 4.25)$) (b2) {Up};
\node [desc] at ($(u1) + (8.75, 1.25)$) (b2) {$N$};
\draw [col3, dashed] (4.7,4.5) -- (10.5, 4.5);
\draw [col3, dashed] (-2,4.5) -- (-0.75, 4.5);
\node [rotate = 90, col3, desc] at (10, 7.5) (d1) {fast layer};
\node [rotate = 90, col3, desc] at (10, 2) (d1) {slow layer};

\end{tikzpicture}
	
		\vspace{-11.75pt}
		\caption{High-throughput connection from a faster layer to a slower layer employing a large MIV array and a shift register.}
		\label{fig:commUpMany}
              \end{minipage}\hfill
          	\begin{minipage}[b]{.45\linewidth}
          	\centering
          	\scalebox{0.8}{
%
%

\begin{tikzpicture}[scale=.35, yscale=-1]


\coordinate (u1) at (2, -1);
\draw [fill = white](u1) rectangle +(2, 4);
\draw ($(u1) + (0,1)$) -- ($(u1) + (2,1)$);
\draw ($(u1) + (0,2)$) -- ($(u1) + (2,2)$);
\node [desc] at ($(u1) + (1, 1.5)$) (b2) {$1$};
\node [desc] at ($(u1) + (1, 2.25)$) (b2) {$\vdots$};
\node [desc] at ($(u1) + (1, 3.55)$) (b2) {$c_f$};
\draw ($(u1) + (0,3)$) -- ($(u1) + (2,3)$);
\coordinate (t1) at (3, 4);
\draw [latex-] ($(t1) + (0, -.1)$) -- ($(t1) + (0, -.9)$);
\draw ($(t1) + (0,0.3)$) circle (.15 and 0.1);
\draw (t1) -- +(0,0.3);
\draw ($(t1) + (-.15, .3)$) -- + (0,.5);
\draw ($(t1) + (.15, .3)$) -- + (0,.5);
\begin{scope}
\clip ($(t1) + (-.2, .8)$) rectangle ($(t1) + (.2, 9)$);
\draw ($(t1) + (0,0.8)$) circle (.15 and 0.1);
\path ($(t1) + (-.15, .8)$) -- ($(t1) + (.15, .8)$);
\end{scope}
\draw ($(t1) + (0,0.9)$) -- +(0,0.2);
\draw ($(t1) + (-.35, .37)$) -- + (.7,.4);
\node [desc, anchor = west] at ($(t1) + (.15, .6)$) (td1) {$N$};
\draw [-latex] ($(t1) + (0, 1.2)$) -- ($(t1) + (0, 1.9)$);
\begin{scope}[xshift=-90]
\coordinate (t1) at (3, 4);
\draw [latex-] ($(t1) + (0, -.1)$) |- ($(t1) + (2.25, -3.5)$);
\draw ($(t1) + (0,0.3)$) circle (.15 and 0.1);
\draw (t1) -- +(0,0.3);
\draw ($(t1) + (-.15, .3)$) -- + (0,.5);
\draw ($(t1) + (.15, .3)$) -- + (0,.5);
\begin{scope}
\clip ($(t1) + (-.2, .8)$) rectangle ($(t1) + (.2, 9)$);
\draw ($(t1) + (0,0.8)$) circle (.15 and 0.1);
\path ($(t1) + (-.15, .8)$) -- ($(t1) + (.15, .8)$);
\end{scope}
\draw ($(t1) + (0,0.9)$) -- +(0,0.2);
\draw ($(t1) + (-.35, .37)$) -- + (.7,.4);
\node [desc, anchor = west] at ($(t1) + (.15, .6)$) (td1) {$N$};
\draw [-latex] ($(t1) + (0, 1.2)$) -- ($(t1) + (0, 1.9)$);
\end{scope}
\node [desc, anchor = west] at ($(t1) + (1.15, .6)$) (td1) {$\dots$};
\coordinate (u1) at (-0.75, 6);
\draw [fill = lightgray](u1) rectangle +(4.25, 1);
\draw [fill = lightgray](u1) rectangle +(3, 1);
\draw [fill = lightgray](u1) rectangle +(1.25, 1);
\node[desc] at ($(u1)+(0.6,0.5)$) {$1$};
\node[desc] at ($(u1)+(2.3,0.5)$) {$\dots$};
\node[desc] at ($(u1)+(3.6,0.5)$) {$c_f$};
\node [desc,gray] at ($(u1) + (2.5, 1.75)$) (b2) {$c_fN$-bit shift-in reg.};
\draw [-latex] ($(u1) + (4.25, 0.5)$) -| ($(u1) + (8.25, 2.5)$);
\node [desc] at ($(u1) + (8.25, 3.25)$) (b2) {router in};
\node [desc] at ($(u1) + (8.25, 4.25)$) (b2) {Up};
\node [desc] at ($(u1) + (8.75, 1.25)$) (b2) {$N$};
\draw [col3, dashed] (4.7,4.5) -- (10.5, 4.5);
\draw [col3, dashed] (-2,4.5) -- (-0.75, 4.5);
\node [rotate = 90, col3, desc] at (10, 7.5) (d1) {fast layer};
\node [rotate = 90, col3, desc] at (10, 2) (d1) {slow layer};

\end{tikzpicture}
	
          	\vspace{-11.75pt}
          	\caption{High-throughput connection from a slower layer to a faster layer employing a large MIV array and a shift register.}
          	\label{fig:commDownMIV}
          \end{minipage}
\end{figure}

%

\section{Results}\label{sec:results}

This section consists of four parts: First, we discuss the accuracy of our models for a set of commercial mixed-signal and digital technology nodes in Sec.~\ref{sec:results:modelfit}. Second, we show the impact of latency of our routing algorithms for \unit[130]{nm} commercial mixed-signal and  \unit[90]{nm} -- \unit[28]{nm} commercial digital nodes in Sec.~\ref{sec:results:latency}. Third, we focus on our router architectures by analyzing throughput improvements in Sec.~\ref{sec:results:throughput}. Forth, we conclude the co-design of routing strategies and algorithms by considering the implementation costs and power improvements in Sec.~\ref{sec:results:area}. Finally, we show the practical applicability of our proposed solution for heterogeneous 3D interconnects by means of a 3D VSoC case study in Sec.~\ref{sec:results:casestudy} using a heterogeneous combination of \unit[30]{nm} mixed-signal and \unit[15]{nm} digital technology nodes.

\subsection{Model accuracy}\label{sec:results:modelfit}

First, we present results on the model accuracy of our area and timing model. Second, we give simulation results that support our claim of accurately modeling communication under zero load.

We fit the area and timing model to the synthesis results of a 3D NoC router with two virtual channels, four flit deep buffers per channel, credit based flow control, wormhole switching, decentralized arbiters and deterministic XYZ-routing using Synopsys design compiler for commercial \unit[130]{nm} mixed-signal technology and commercial \unit[28 -- 90]{nm} digital technology. We use both general purpose (GP) and ultra low voltage (ULV) mixed-signal technology to exemplify potential differences. The synthesis results are used to evaluate the accuracy of the model fit.

The synthesis results and the fitted models for the \emph{area scaling factor} are shown in Fig.~\ref{fig:nocrouterScalingArea}. Curve fitting is conducted with Mathematica 10. 
The example yields a non-ideality factor $\alpha = 3462.7$ and an offset of $\hat{\alpha} = 29.8$ for \unit[130]{nm} GP technology with a root mean square error (RMSE) of 0.1286. ULV technology yields $\alpha = 13.2$ and an offset of $\hat{\alpha} = 0.124$ with a RMSE of 0.1414. 

The synthesis results and the fitted models for the \emph{clock scaling factor} with a predicted maximum achievable clock frequency of \unit[5.0]{GHz} are shown in Fig.~\ref{fig:nocrouterScalingTiming}. (Smaller commercial technology nodes below \unit[28]{nm} are not available, thus we set $\beta$ instead of fitting it to the model.) The fitting is conducted with Mathematica 10. The results for GP nodes are $\beta =  32.85$, $\hat{\beta} = 7.88$, $\tilde{\beta} = 0.76$, and $\bar{\beta} = 1.26$ with a RMSE of 0.30. For ULV nodes, the model yields the parameters $\beta =  77.45$, $\hat{\beta} = 2.48$, $\tilde{\beta} = 0.76$ and $\bar{\beta} = 2.77$, with a RMSE of 0.71.

We claimed that our models for head flit latency are accurate under zero load by construction. To validate this, we use simulations for the latency enhancement of packets traversing the network for our two proposed routing strategies. These are shown in Figs.~\ref{fig:SpeedupAlgo1} and \ref{fig:SpeedupAlgo2}. We report the latency enhancement both from model and simulations. One can see that the results are matching and that our models, indeed, are accurate under zero load.

\begin{figure}
		\centering
		\includegraphics[width=\linewidth]{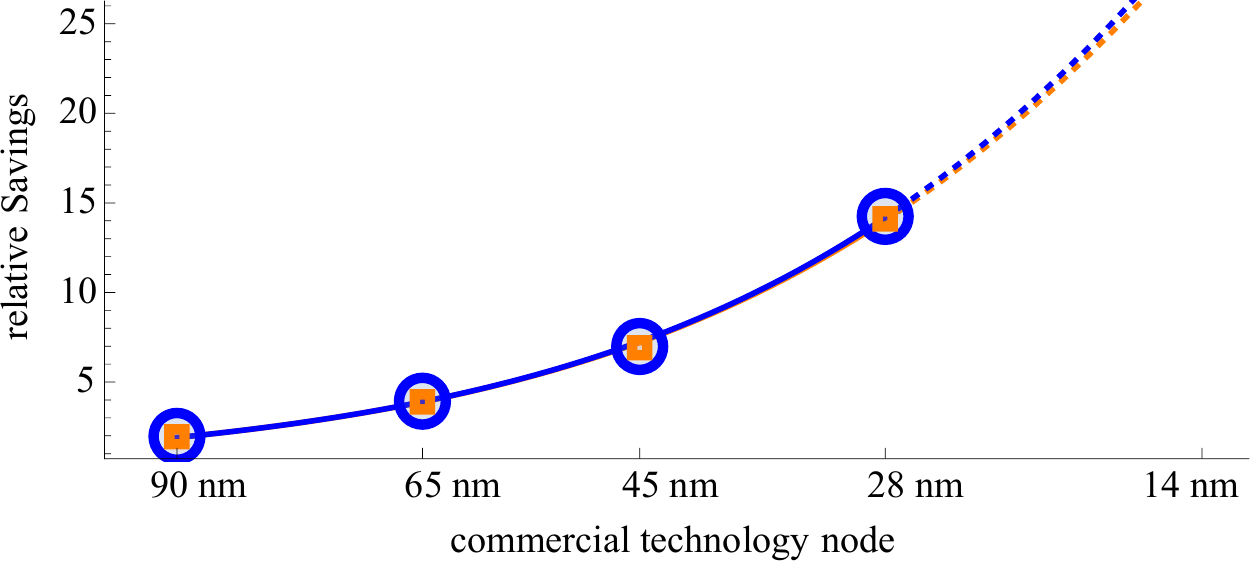}
		\caption{Area model accuracy using exemplary fit (orange -- ULV, blue -- GP).}
		\label{fig:nocrouterScalingArea}
\end{figure}
\begin{figure}
		\centering
		\includegraphics[width=\linewidth]{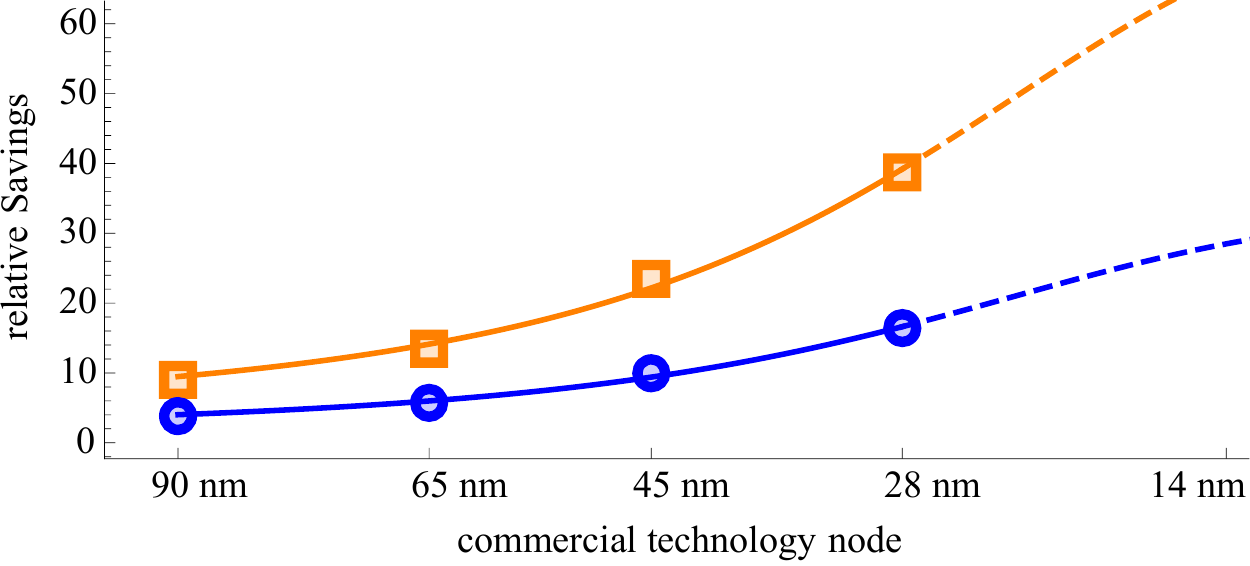}
		\caption{Timing model accuracy using exemplary fit (predictive maximum achievable clock frequency of  \unit[5]{GHz}; orange -- ULV, blue -- GP).}
		\label{fig:nocrouterScalingTiming}
\end{figure}

\subsection{Latency of routing algorithms}\label{sec:results:latency}

\subsubsection{Latency of \AlgorithmI}

Packets from any node in the mixed-signal layers to any node in the digital layers profit from \AlgorithmI. We compare their latency under zero load to \XYZ. As an exemplary use case, we use a 3D SoC, which consists of two layers: One in a commercial mixed-signal technology implementing a 4x4 NoC and one in \unit[90]{nm} -- \unit[28]{nm} commercial digital node implementing a NoC with more nodes according to the area model (Eq.~\ref{eq:area}) on basis of synthesis results. The achieved speedup is calculated using both a cycle-accurate NoC simulator with 16 flit deep buffer, wormhole routers and four VCs \cite{Joseph.2016c} and $\Delta_H$ from Eq.~\ref{eq:packetTransmissionTime}. The results are shown in Fig.~\ref{fig:SpeedupAlgo1} for all available hop distances in the layer in mixed-signal technology. Simulation and model results are identical; the model is accurate under zero load. The latency speedup is between $1.5\times$ and $6.5\times$. It is larger if a more advanced digital node is used, which is consistent with the expectations from Sec.~\ref{sec:potentials}. Note that this speedup is achieved without any implementation costs.

\begin{figure*}
	\centering
%
\pgfplotstableread[col sep = comma]{tables/ZupXYZdownSpeedUp.csv}\SpeedupData
\begin{tikzpicture}
	\begin{axis}[scale = 0.75,
	ybar,
	bar width=0.075cm,
	grid = both,
	ymin = 0,
	xtick={0.25, .5, .75, 1, 1.25, 1.5},
	xticklabels={1, 2, 3, 4, 5, 6},
	xlabel=hop distance in 4x4 mixed signal layer, 
	legend pos=outer north east ,
	height = 4.48cm,
	width=14cm,
	ylabel style={align=center,desc, yshift = -20pt},
	xlabel style={desc, yshift = 2pt, align=center},
	legend style={desc},
	xticklabel style={desc, align = center},
	yticklabel style={desc},
	ylabel=latency enhancement
	]
	
	\addlegendimage{empty legend}
	\addlegendimage{empty legend}
	\addlegendimage{empty legend}
	\addlegendimage{empty legend}
	\addlegendimage{empty legend}

	\addplot [col2, fill=col2] table[x=s,y=speedup130] {\SpeedupData};
	\addplot [col2, fill=col2!10] table[x=s,y=speedup130] {\SpeedupData};

	\addplot [col1, fill=col1] table[x=s,y=speedup90] {\SpeedupData};
	\addplot [col1, fill=col1!10] table[x=s,y=speedup90] {\SpeedupData};
	
	\addplot [col3, fill=col3] table[x=s,y=speedup65] {\SpeedupData};
	\addplot [col3, fill=col3!10] table[x=s,y=speedup65] {\SpeedupData};

	\addplot [red, fill=red] table[x=s,y=speedup45] {\SpeedupData};
	\addplot [red, fill=red!10] table[x=s,y=speedup45] {\SpeedupData};

	\addlegendentry{\hspace{-.4cm}\bfseries\unit[130]{nm} mixed signal and}

	\addlegendentry{\unit[90]{nm} dig.: sim.\draw[col2, fill = col2, /tikz/.cd,yshift=-0.25em](0cm,0cm) rectangle (3pt,0.8em);\enspace, model\! \draw[col2, fill = col2!10, /tikz/.cd,yshift=-0.25em](0cm,0cm) rectangle (3pt,0.8em);}
	\addlegendentry{\unit[65]{nm} dig.: sim.\draw[col1, fill = col1, /tikz/.cd,yshift=-0.25em](0cm,0cm) rectangle (3pt,0.8em);\enspace, model\! \draw[col1, fill = col1!10, /tikz/.cd,yshift=-0.25em](0cm,0cm) rectangle (3pt,0.8em);}
	\addlegendentry{\unit[45]{nm} dig.: sim.\draw[col3, fill = col3, /tikz/.cd,yshift=-0.25em](0cm,0cm) rectangle (3pt,0.8em);\enspace, model\! \draw[col3, fill = col3!10, /tikz/.cd,yshift=-0.25em](0cm,0cm) rectangle (3pt,0.8em);}
	\addlegendentry{\unit[28]{nm} dig.: sim.\draw[red, fill = red, /tikz/.cd,yshift=-0.25em](0cm,0cm) rectangle (3pt,0.8em);\enspace, model\! \draw[red, fill = red!10, /tikz/.cd,yshift=-0.25em](0cm,0cm) rectangle (3pt,0.8em);}
\end{axis}
\end{tikzpicture}
	
	\vspace{-14pt}
	\caption{Latency enhancement of \AlgorithmI\ to \XYZ.}
	\label{fig:SpeedupAlgo1}
\end{figure*}

\subsubsection{Latency of \AlgorithmII}
Packets from any node in the mixed-signal layers to any node in the mixed-signal layers profit from \AlgorithmII. Again, we compare their latency under zero load to \XYZ. As an exemplary use case, we use a the same 3D SoC as before with two layers. The achieved speedup is calculated using both a cycle-accurate NoC simulator with 16 flit deep buffer, wormhole routers and four VCs and $\Delta_H$ and $\Delta_V$ from Eqs.~\ref{eq:packetTransmissionTime}, \ref{eq:latencyVdown} and \ref{eq:latencyVup}. The results are shown in Fig.~\ref{fig:SpeedupAlgo2} for all available hop distances in the layer in mixed-signal technology. The latency speedup is between $0.54\times$ and $1.79\times$. It is noteworthy that any speedup is achieved with negligible implementation costs.

\begin{figure*}
	\centering
%
\pgfplotstableread[col sep = comma]{tables/ZXYZSpeedUp.csv}\SpeedupDataZYXZ
\begin{tikzpicture}
	\begin{axis}[scale = 0.75,
	ybar,
	bar width=0.085cm,
	grid = both,
	ymin = -0.65,
	ymax = 1.15,
	xtick={0.25, .5, .75, 1, 1.25, 1.5},
	xticklabels={1, 2, 3, 4, 5, 6},
	ytick = {-.5,0,.5, 1},
	yticklabels={0.5, 1, 1.5, 2},  
	xlabel=hop distance in 4x4 layer in mixed signal node, 
	legend pos=outer north east ,
	height = 4.48cm,
	width=14.1cm,
	ylabel style={align=center,desc, yshift = -10pt},
	xlabel style={desc, yshift = 2pt, align=center},
	legend style={desc},
	xticklabel style={desc, align = center},
	yticklabel style={desc},
	ylabel=latency enhancement
	]
	
	\addlegendimage{empty legend}
	\addlegendimage{empty legend}
	\addlegendimage{empty legend}
	\addlegendimage{empty legend}
	\addlegendimage{empty legend}

	\addplot [col2, fill=col2] table[x=s,y=speedup130corr] {\SpeedupDataZYXZ};
	\addplot [col2, fill=col2!10] table[x=s,y=speedup130corr] {\SpeedupDataZYXZ};
	
	\addplot [col1,fill=col1] table[x=s,y=speedup90corr] {\SpeedupDataZYXZ};
	\addplot [col1, fill=col1!10] table[x=s,y=speedup90corr] {\SpeedupDataZYXZ};
		
	\addplot [col3, fill=col3] table[x=s,y=speedup65corr] {\SpeedupDataZYXZ};
	\addplot [col3, fill=col3!10] table[x=s,y=speedup65corr] {\SpeedupDataZYXZ};
	
	\addplot [red, fill=red] table[x=s,y=speedup45corr] {\SpeedupDataZYXZ};
	\addplot [red, fill=red!10] table[x=s,y=speedup45corr] {\SpeedupDataZYXZ};
	 
	\addlegendentry{\hspace{-.4cm}\bfseries \unit[130]{nm} mixed signal and}

	\addlegendentry{\unit[90]{nm} dig.: sim.\draw[col2, fill = col2, /tikz/.cd,yshift=-0.25em](0cm,0cm) rectangle (3pt,0.8em);\enspace, model\! \draw[col2, fill = col2!10, /tikz/.cd,yshift=-0.25em](0cm,0cm) rectangle (3pt,0.8em);}
	\addlegendentry{\unit[65]{nm} dig.: sim.\draw[col1, fill = col1, /tikz/.cd,yshift=-0.25em](0cm,0cm) rectangle (3pt,0.8em);\enspace, model\! \draw[col1, fill = col1!10, /tikz/.cd,yshift=-0.25em](0cm,0cm) rectangle (3pt,0.8em);}
	\addlegendentry{\unit[45]{nm} dig.: sim.\draw[col3, fill = col3, /tikz/.cd,yshift=-0.25em](0cm,0cm) rectangle (3pt,0.8em);\enspace, model\! \draw[col3, fill = col3!10, /tikz/.cd,yshift=-0.25em](0cm,0cm) rectangle (3pt,0.8em);}
	\addlegendentry{\unit[28]{nm} dig.: sim.\draw[red, fill = red, /tikz/.cd,yshift=-0.25em](0cm,0cm) rectangle (3pt,0.8em);\enspace, model\! \draw[red, fill = red!10, /tikz/.cd,yshift=-0.25em](0cm,0cm) rectangle (3pt,0.8em);}
\end{axis}
\end{tikzpicture}
	
	\vspace{-14pt}
	\caption{Latency enhancement of \AlgorithmII\ to \XYZ.}
	\label{fig:SpeedupAlgo2}
\end{figure*}

\subsection{Throughput of high vertical-throughput router}\label{sec:results:throughput}

\begin{figure}
	\centering

\newcommand{\drawpacketClock}[6]{
\draw[-latex, #4, thick, dashed, #3] (#1,#2) -- (#1+#5, #2+1);
\foreach \i  [evaluate=\i as \j using \i*#6]  in {1,..., 3}{
\draw[-latex, #4, #3] (#1+\j,#2) -- (#1+\j+#5, #2+1);
}
}

\newcommand{\drawpacketClockUp}[6]{
	\draw[-latex, #4, thick, dashed, #3] (#1,#2) -- (#1+#5, #2-1);
	\foreach \i  [evaluate=\i as \j using \i*#6]  in {1,..., 3}{
		\draw[-latex, #4, #3] (#1+\j,#2) -- (#1+\j+#5, #2-1);
	}
}


	\begin{tikzpicture}[scale = 0.65, yscale=-1,
cube/.style={thick, black, col1, fill = col3, fill opacity = 0.2},
axis/.style={-latex,col2, thick}, digital/.style={thick, col1}, digitalLines/.style = {col1}]

\foreach \i in {0,2,...,10}{
	\draw[black!60, thin] (\i, 0) -- (\i, 1);
}
\foreach \i in {0,0.5,...,10}{
	\draw[black!60, thin] (\i, 1) -- (\i, 2);
}
\foreach \i in {0,...,2}{
	\draw[] (0,\i) -- (10,\i);
}
\node[desc, anchor = west, align = left, xshift = -4pt] at (-2.5, .5)   (a) {\textbf{slower layer}};
\node[desc, anchor = west, align = left, xshift = -4pt] at (-2.5, 1.5)   (a) {\textbf{faster layer}};

\foreach \i in {2,4, ...,10}{
	\node[yshift = -0pt,desc, anchor = south, align = center, black!60] at (\i, 0)   (a) {t+\i};
}
\node[yshift = -0pt,desc, anchor = south, align = center, black!60] at (0, 0)   (a) {t};
\node[yshift = -0pt,desc, anchor = south, align = center, black!60] at (-2, 0)   (a) {\textbf{time:}};

\begin{scope}
\clip (0, 0) rectangle (2, 1);
\draw[-latex, line width = 3pt, col2] (0, 0) -- (2, 1);
\end{scope}
\drawpacketClock{2}{1}{}{col2}{.5}{.5}

\begin{scope}
\clip (8, 1) rectangle (10, 0);
\draw[-latex, line width = 3pt, col2] (8, 1) -- (10,0);
\end{scope}
\drawpacketClockUp{5.5}{2}{}{col2}{.5}{0.5}


\draw [decorate,decoration={mirror,brace,amplitude=3pt	,raise=1pt},yshift=0pt, draw = col3]
(2,2) -- (4,2);
\draw [decorate,decoration={mirror,brace,amplitude=3pt	,raise=1pt},yshift=0pt, draw = col3]
(8,2) -- (10,2);
\node [desc,col3, xshift = -20pt,yshift = -2pt, anchor = north, align=center] at(6 , 2) {throughput not dominated by slowest clock frequency};


\node [desc, align = center, anchor = south] at (5, -0.6) (superpacket) {pseudosynchronous, high-throughput packet transmission};
\draw [-latex] ([yshift=-6pt, xshift =-120pt]superpacket.south) -- (0.3, 0.1);
\draw [-latex] ([yshift=-6pt, xshift =120pt]superpacket.south) -- (9.6, 0.1);
\end{tikzpicture}

	\vspace{-8pt}
	\caption{Throughput of high-vertical-throughput router architecture.}
	\label{fig:throughputPseudo}
\end{figure}

Using the novel high vertical-throughput router architecture, the throughput of packets can be increased if the slower layer is contained in the path. In fact, the throughput will be as high as in the faster layer, if area for links and routers is expendable. This is shown in Fig.~\ref{fig:throughputPseudo}. For a transition from a slower to a faster layer (shown on left-hand side), the packet throughput is not determined by the slower clock frequency because the packet can be completely transmitted once it is available. For the opposite direction (right-hand side), the throughput is also not determined by the slower clock, since the complete packet  becomes available at the faster router.

\subsection{Area and power of proposed router architecture and routing algorithms}\label{sec:results:area}

We synthesize the baseline router using \XYZ\ routing and the proposed \emph{high vertical-throughput router} using \AlgorithmI{/}\AlgorithmII\ routing in a commercial \unit[45]{nm} ULV mixed-signal technology (We only synthesize for mixed-signal since the routers in the digital faster layer do not have a modified crossbar). The same crossbar optimizations are applied for both conventional and vertical-high throughput architectures. We assume a $4\!\times\!3\!\times\!3$ NoC with one digital layer. The flit width is \unit[32]{b}, the input buffer depth is eight, the flow-control is credit-based and four virtual channels are only used in the digital layer. Both architectures, the proposed high vertical-throughput router as well as the baseline baseline, can run with a maximum frequency of \unit[500]{MHz}. Area and power results are shown in Tab.~\ref{tab:areapower} and elaborate as follows:

The \emph{area overhead} of the proposed routing algorithms is negligible. In fact \AlgorithmI\ routing has \unit[-1.32]{\%} overhead compared to \XYZ\ routing. For \AlgorithmII,  the area is only increased by three gate equivalents, which affects the whole router area by less than \unit[-2.38]{\%}. The area of the crossbar and the input buffers depends on the clock frequency of the digital routers. To bridge to a clock frequency of \unit[1]{GHz} in the faster layer ($c_f{=}2$), the total area required for the routers is increased by \unit[2.1]{\%}. If the routers in the fast layer are clocked at \unit[2]{GHz} ($c_f{=}4$), the total area increases by \unit[10.6]{\%}.

\emph{Dynamic power savings} are possible. We simulated the aforementioned NoC with 1M clock cycles, injecting uniform random traffic at \unit[4]{\%} injection rate. The digital layers is implemented in \unit[15]{nm} digital technology and the mixed signal layer in \unit[45]{nm} ULV node. For a clock difference of $c_f = 2$, the proposed routing algorithms saved \unit[41.1]{\%} dynamic power in comparison to conventional XYZ-routing; For a clock difference of $c_f = 4$, the proposed routing algorithms saved \unit[30.3]{\%} dynamic power.

\subsection{Case Study}\label{sec:results:casestudy}

We analyze our approach for a 3D VSoC based on \cite{Zarandy.2011} with four layers as shown in Figure~\ref{fig:casestudy}: The first layer is a sensing die, implementing a \unit[180]{nm} CIS (CMOS Imaging Sensor). The second layer implements nine analog digital converters (ADCs) and three analog accelerators \cite{Jia.2017} in \unit[90]{nm} mixed-signal node. The third layer implements 6 processors and 6 SIMD (single instruction multiple data) acceleration units in \unit[15]{nm} digital node. In the fourth layer there are 12 processor cores in \unit[30]{nm} digital node. The first and second layer are connected via point-to-point links. The second, third and fourth layer are connected via a 3D NoC with \unit[32]{b} wide links, 8 flit deep buffers and 4 VCs. Packets are 32 flits long with one flit header. Routers in the digital layer are clocked at \unit[1]{GHz} and in the mixed-signal layer at \unit[0.5]{GHz}.

The 3D VSoC implements an image processing pipeline for face recognition. The image sensor records at 720p. The ADCs send the digital raw image to the processors in the third layer, which apply Bayer filter. Then, the SIMD units reduce the resolution by a factor of 4 to increase feature extraction speed. The result is transmitted to the analog accelerators in the second layer, which extract features using Viola-Jones algorithm \cite{Viola-2001:1}. The resulting region of interest is transmitted to the fourth layer, in which the processors execute Shi and Tomasi algorithm \cite{Shi-1994:1} to find features to track and Kande-Lucas-Tomasi algorithm \cite{Lucas-1991:1} tracks them. Work is split up equally among the available resources in each step.

We simulate the VSoC's NoC using the described application traffic. Thereby, we compare \AlgorithmI\ and \AlgorithmII\ with \XYZ\ routing. We simulate 3M clock cycles in the digital layers and 1.5M in the mixed-signal layer. We measure the average flit latency as \unit[145.91]{ns}  for conventional routing and as \unit[64.46]{ns} for the proposed routing. This equates to a speedup of $2.26\times$. Using the models, we calculate a theoretical speedup of $2.28\times$ under zero load. Average delay for whole packets is reduced from \unit[229.23]{ns} to \unit[123.07]{ns}, which is a speedup of $1.86\times$.

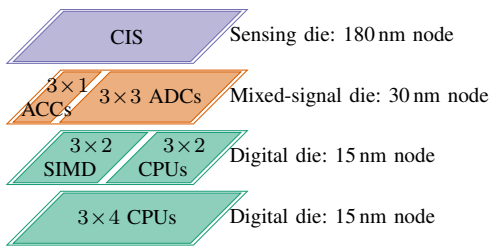
\begin{figure}
	\centering
%
%
%
	
\begin{tikzpicture}[scale = .62,
axis/.style={-latex,black, thick}, digital/.style={thick, black}, digitalLines/.style = {black}]

\tikzmath{\l1 = 0;};
\tikzmath{\l2 = 1.3;};
\tikzmath{\l3 = 2.6;};
\tikzmath{\l4 = 3.9;};

\draw [col1] (0,\l1, 0) -- (4,\l1,0) --(4,\l1,3) -- (0,\l1,3) -- cycle;
\node [desc, anchor = west] (source) at (4,\l1,1.5) {Digital die: \unit[15]{nm} node};
\draw [col1, fill = col1!50] (0.1, \l1, 0.1) -- (3.9,\l1,0.1) --(3.9,\l1,2.9) -- (0.1,\l1,2.9) -- cycle;
\node [desc, anchor = center] (source) at (2, \l1, 1.5) {$3\!\times\!4$ CPUs};

\draw [col1] (0,\l2, 0) -- (4,\l2,0) --(4,\l2,3) -- (0,\l2,3) -- cycle;
\node [desc, anchor = west] (source) at (4,\l2,1.5) {Digital die: \unit[15]{nm} node};
\draw [col1, fill = col1!50] (0.1, \l2, 0.1) -- (1.9,\l2,0.1) --(1.9,\l2,2.9) -- (0.1,\l2,2.9) -- cycle;
\node [desc, anchor = center, align = center] (source) at (1, \l2, 1.5) {\quad$3\!\times\!2$\\\hspace{-1em}SIMD};
\draw [col1, fill = col1!50] (2.1, \l2, 0.1) -- (3.9,\l2,0.1) --(3.9,\l2,2.9) -- (2.1,\l2,2.9) -- cycle;
\node [desc, anchor = center, align = center] (source) at (3, \l2, 1.5) {\hspace{1em}$3\!\times\!2$\\\hspace{-1em}CPUs};

\draw [col2] (0,\l3, 0) -- (4,\l3,0) --(4,\l3,3) -- (0,\l3,3) -- cycle;
\node [desc, anchor = west] (source) at (4,\l3,1.5) {Mixed-signal die: \unit[30]{nm} node};
\draw [col2, fill = col2!50] (0.1, \l3, 0.1) -- (.9,\l3,0.1) --(.9,\l3,2.9) -- (0.1,\l3,2.9) -- cycle;
\draw [col2, fill = col2!50] (1.1, \l3, 0.1) -- (3.9,\l3,0.1) --(3.9,\l3,2.9) -- (1.1,\l3,2.9) -- cycle;
\node [desc, anchor = center, align = center] (source) at (.5, \l3, 1.5) {\hspace{1em}$3\!\times\!1$\\\hspace{-1em}ACCs};
\node [desc, anchor = center] (source) at (2.5, \l3, 1.5) {$3\!\times\!3$ ADCs};

\draw [col3] (0,\l4, 0) -- (4,\l4,0) --(4,\l4,3) -- (0,\l4,3) -- cycle;
\node [desc, anchor = west] (source) at (4,\l4,1.5) {Sensing die: \unit[180]{nm} node};
\draw [col3, fill = col3!50] (0.1, \l4, 0.1) -- (3.9,\l4,0.1) --(3.9,\l4,2.9) -- (0.1,\l4,2.9) -- cycle;
\node [desc, anchor = center] (source) at (2, \l4, 1.5) {CIS};

\end{tikzpicture}

	\caption{3D VSoC case study based on \cite{Zarandy.2011}.} 
	\label{fig:casestudy}
\end{figure}
\section{Discussion}\label{sec:discussion}

First, we discuss the model accuracy as those are the basis for the subsequent evaluation of the routing algorithms. The aim of the models is to estimate the impact of heterogeneity on NoCs. Figs. \ref{fig:nocrouterScalingArea} and \ref{fig:nocrouterScalingTiming} demonstrate a very good fit of the models for available nodes to academia. The area model has small RSMEs, which is a result of the model's physical foundation. It was not beneficial to add a linear term to this model; this increases the RMSEs. The timing model is empirical and thus the fit is overall less accurate than the area model fit, shown by higher RMSEs. The model converges to the target maximum clock frequency, as desired. If more modern technology nodes were available, either a better model with a physical foundation could be found or the fit of our model could be improved. Nonetheless, the model serves its purpose here: both the timing and the area model provide sufficient accuracy to assess the influence of heterogeneous integration on routing, as we further quantify. Therefore, we apply the fitted data to calculate the propagation speed $\omega$ for a predictive technology. This is shown in Fig.~\ref{fig:propagationSpeed}. Comparing predictive technology calculated with the models to the synthesis results for \unit[130]{nm} commercial mixed-signal and \unit[90]{nm} -- \unit[28]{nm} commercial digital technologies yields an accuracy of between $\unit[1.4]{\%}$ and $\unit[7.8]{\%}$. This supports that the proposed models are valid. We also propose models for latency and throughput. That they are accurate is given by construction and validated using simulations. The results are shown in Figs.~\ref{fig:SpeedupAlgo1} and \ref{fig:SpeedupAlgo2}. The results for latency will be identical, regardless if obtained from simulations or from the proposed model. Therefore, the communication models are precise under zero load. There is no need to model the behavior under load for the purpose of this paper. Of course, the models will not be valid if further traffic is injected and the assumption of zero load is violated. However, our model can also be extended to cover dynamic effects by applying a queueing model \cite{Kiasari.2013}. This is not required here because the unique effects of heterogeneity have already been revealed under zero load. Load is applied in our case study and our routing show a latency enhancement, as well. In fact, we see a speedup of $2.26\times$ in simulations under load, while our models predict a speed-up of $2.28\times$. This shows that our models are accurate enough to find useful routing algorithms under real conditions, even though they only account for zero load within our case study. Thus, by means of our model, we are able to conduct powerful routing strategies and architectures for heterogeneous 3D interconnect. 

Second, the exemplary implementations of routing algorithms and router architectures are evaluated. The aim of the implementations is to mitigate the negative effects of heterogeneity (worse latency and throughput), with as few area costs as possible. The largest limitations of heterogeneity emerge if the difference between mixed-signal and purely digital technology are large; therefore we focus on a chip using \unit[130]{nm} commercial mixed-signal technology and \unit[28]{nm} commercial digital technology. The results can also be applied to any other combination of technologies with similar relative technology scaling factor $\Xi$. The proposed routing algorithms \AlgorithmI\ and \AlgorithmII\ provide up to $6.5\times$ latency reductions for packets from routers in the mixed-signal nodes to routers in the digital layer and up to $1.79\times$ latency reductions for packets within the layer in the mixed-signal node in comparison to dimension order routing. This is shown in Figs.~\ref{fig:SpeedupAlgo1} and \ref{fig:SpeedupAlgo2}. For \AlgorithmII, there is a performance penalty for distances below $\Phi$ (Eq.~\ref{eq:Phi}) of up to $45\%$, as expected (see Fig.~\ref{fig:SpeedupAlgo2}, left-hand side). The threshold distance shrinks for more advanced technology nodes, which is also expected. The \XYZ\ outperforms \AlgorithmII\ for low technology differences for all distances. 

We compare the router for a practical scenario with a clock difference of 2 between layers in different nodes to show advantages of our approach. The results are summarized in Tab.~\ref{tab:areapower}. For a real-world based benchmark, we simulate a face recognition image processing pipeline on a 3D VSoC based on \cite{Zarandy.2011} with \unit[45]{nm} mixed-signal technology and \unit[15]{nm} digital technology. The proposed vertical high-throughput router offers $2.26\times$ better latency and an increased throughput of up to $2\times$, in simulations, at \unit[2.1]{\%} area increase comparing to a standard router for \XYZ\ routing. If a larger throughput increase is desired, additional area costs must be expended. While the area is increased, dynamic power is saved: We showed \unit[41.4]{\%} dynamic power in simulations. The performance speedups and power savings demonstrates the impressive benefit of the proposed approach for typical applications of heterogeneous 3D SoCs.

{\footnotesize \sffamily
	\begin{table}
		\caption{PPA comparison of proposed routing algorithms and high-throughput routers to conventional router.}
		\label{tab:areapower}
		\centering
		\begin{tabular}{|r|r|r|r|}
			\hline
			\multicolumn{2}{|c|}{\textbf{Performance}} & \textbf{Area} & \textbf{Power} \\
			\textit{throughput}& average \textit{latency}& total \textit{area} & dynamic \textit{power}\\
			increase &  speedup of flits & increase & savings \\
			\hline
			\textcolor{col1}{\textbf{2$\times$}} & 	\textcolor{col1}{\bfseries 2.26$\times$}&	\textcolor{red!90}{\bfseries \unit[2.1]{\%}}&	\textcolor{col1}{\bfseries \unit[41.4]{\%}}\\
			\hline
		\end{tabular}
	\end{table}
}

To summarize, \AlgorithmI\ and \AlgorithmII, in combination with the novel router architectures, have small area overhead and better performance than state-of-the-art both in theoretical and practical evaluations. Therefore, limitations of heterogeneity on routing in 3D NoCs are mitigated. Only by an integrated design of routing strategies and architectures, we are able to design an efficient and powerful heterogeneous 3D interconnect.

\section{Conclusion}

Heterogeneous 3D SoCs need to combine disparate technologies, e.g.\ mixed-signal and purely digital technologies; However, the impact of heterogeneity on interconnection networks was previously not considered. We show that varying throughput and latency of NoCs in layers in disparate technologies drastically degrades network performance. To prove this, models for area and timing of routers, and for latency and throughput under zero load have been proposed. The models are well-founded and express the relevant effects of heterogeneity on routing; the model accuracy is high and shows an error of \unit[1.4]{\%}-\unit[7.8]{\%} for an exemplary technology scenario. Based on the model's findings, we develop principles for routing in heterogeneous 3D SoCs. We show their practical applicability by proposing two new exemplary routing algorithms. These reduce the network latency for packets between nodes in mixed-signal and purely digital technologies and between nodes in a mixed-signal layer by utilization of faster transmission speeds in digital layers. For an exemplary SoC, with layers in commercial \unit[28]{nm} digital and commercial \unit[130]{nm} mixed-signal technology, we achieve a latency reduction of up to $6.5\times$ at negligible area overhead in comparison to conventional dimension ordered routing. We further propose a novel vertical high-throughput router architecture and a vertical link design to overcome the throughput limitations, which increase throughput by up to $2\times$ at $6\%$ reduced router area costs for the same exemplary set of technologies. Within simulation of a case study for a 3D VSoC using \unit[30]{nm} mixed-signal and \unit[15]{nm} digital technologies implementing a face recognition algorithm, we could validate our theoretical findings with a speedup of $1.86\times$ to $2.26\times$ for average latency and $2\times$ for throughput. We also showed \unit[41.4]{\%} reduced dynamic power in simulations using uniform random traffic. Summing up, the proposed co-design of routing algorithms and router architectures mitigate limitations of NoCs in heterogeneous 3D SoC. It allows much better performance and dynamic power consumption at small to negligible area overhead.

\section*{Acknoledgements}
This work is funded by the German Research Foundation (DFG) projects PI 447/8 and GA 763/7.

\bibliographystyle{IEEEtran}
\bibliography{bibliography_short}

\begin{IEEEbiography}[{\includegraphics[width=1in,height=1.25in,clip,keepaspectratio]{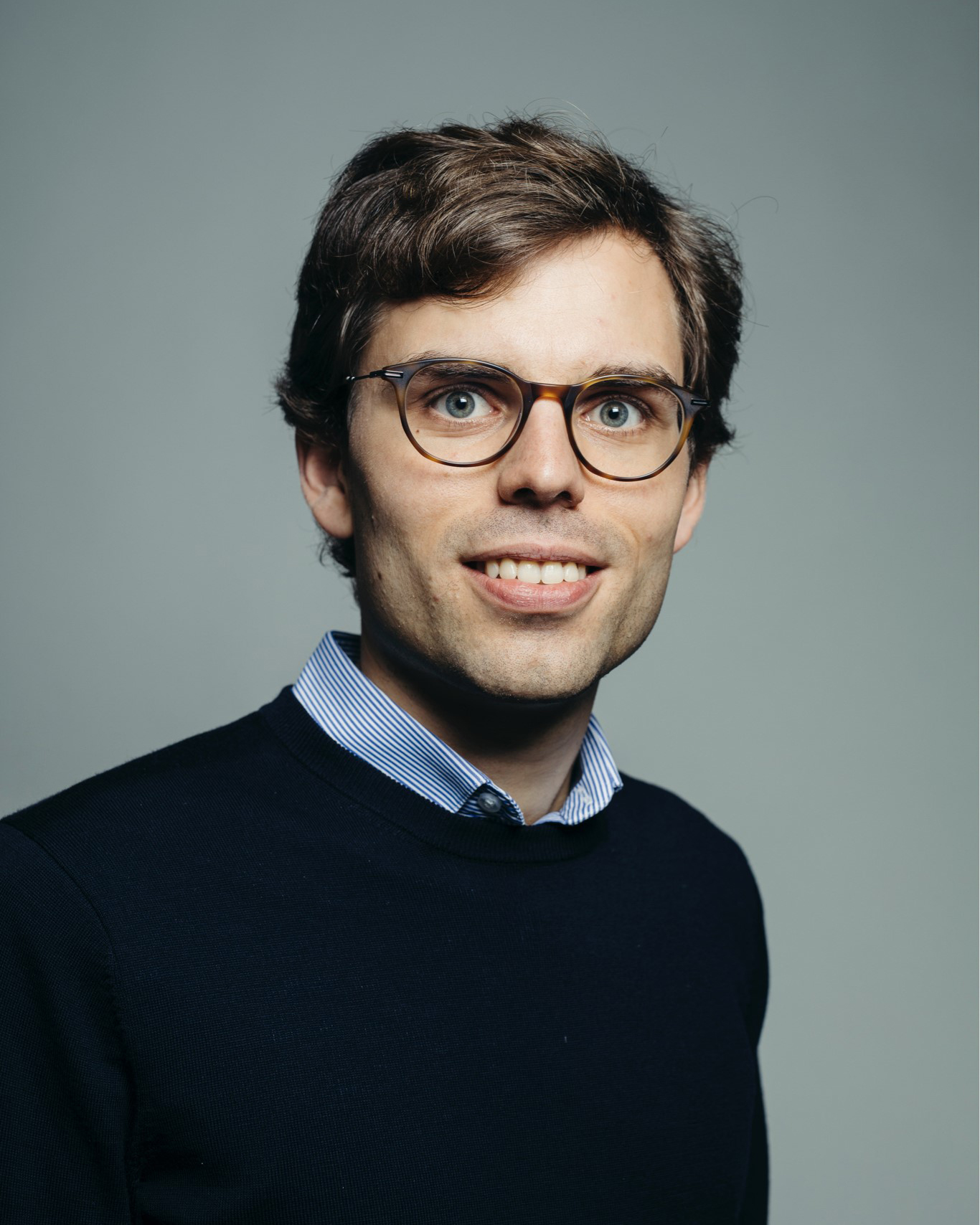}}]{Jan Moritz Joseph} received the B.Sc. degree in Medical Engineering in 2011 and the M.Sc. degree in Informatics in 2014 from the Universität zu Lübeck, Germany. From 2008 to 2014, he was a scholarship holder of The German National Merit Foundation. He is currently as a research assistant at the Otto-von-Guericke-Universität Magdeburg, Germany. His focus is on 3D integration. Currently, he researches heterogeneous integration, interconnects and NoCs.
\end{IEEEbiography}

\begin{IEEEbiography}[{\includegraphics[width=1in,height=1.25in,clip,keepaspectratio]{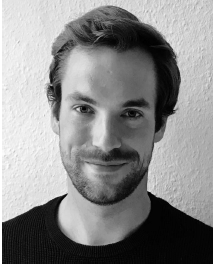}}]{Lennart Bamberg} Lennart Bamberg received the B.Sc. and M.Sc. degree in Electrical and Information Engineering from the University of Bremen, Germany, in 2014 and 2016, respectively. He is currently 	working towards the Ph.D. degree at the University of Bremen, Germany, where he is employed since 2016 as a teaching and research associate. In 2019, Lennart Bamberg  joined the Georgia Institute of Technology, Atlanta (USA) for four months as a visiting scholar. Lennart Bamberg received the Best Paper Award at PATMOS 2017 and PATMOS 2018. His research interests include low-power design, communication-centric design and heterogeneous 3D SoCs.
\end{IEEEbiography}

\begin{IEEEbiography}[{\includegraphics[width=1in,height=1.25in,clip,keepaspectratio]{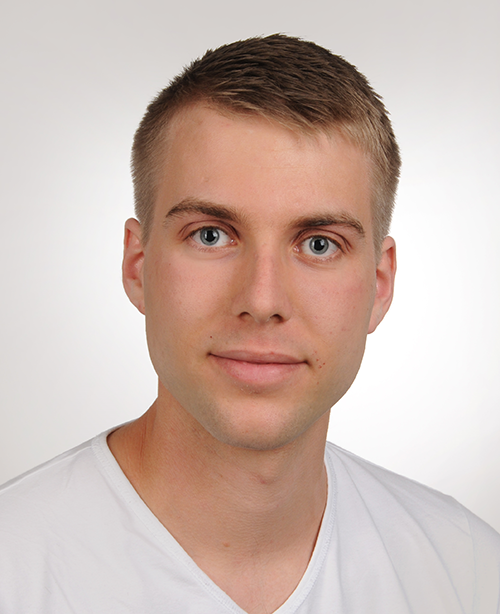}}]{Dominik Ermel}
received the B.Sc. degree in Mathematics in 2016 from the Otto-von-Guericke-University Magdeburg. He is currently working on his M.Sc. degree. He is interested in mathematical optimization and has been applying combinatorial optimization on the topics heterogeneous 3D integration and Networks-on-Chip.
\end{IEEEbiography}

\begin{IEEEbiography}[{\includegraphics[width=1in,height=1.25in,clip,keepaspectratio]{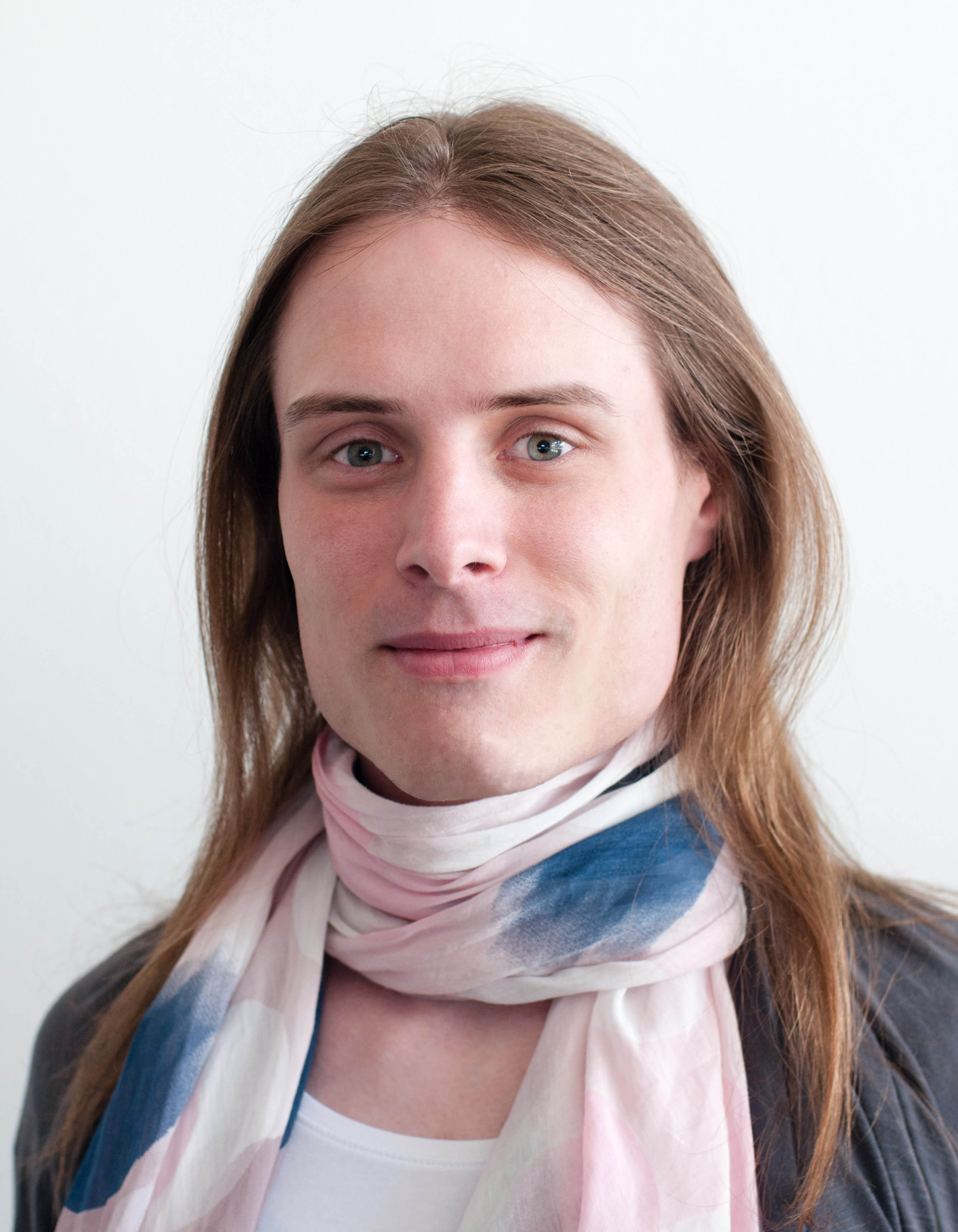}}]{Anna Drewes}
received the B.Sc. and M.Sc. degrees in computer science form the University of Lübeck, Germany, in 2015 and 2017, respectively.
She is currently pursuing a Ph.D. while working as a research assistant at the Institute for Information Technology and Communications at the Otto-von-Guericke-University Magdeburg, Germany.
Her research interests include communications infrastructure and interconnects, especially for FPGAs, as well as the use of heterogeneous systems for database query processing.
\end{IEEEbiography}

\begin{IEEEbiography}[{\includegraphics[width=1in,height=1.25in,clip,keepaspectratio]{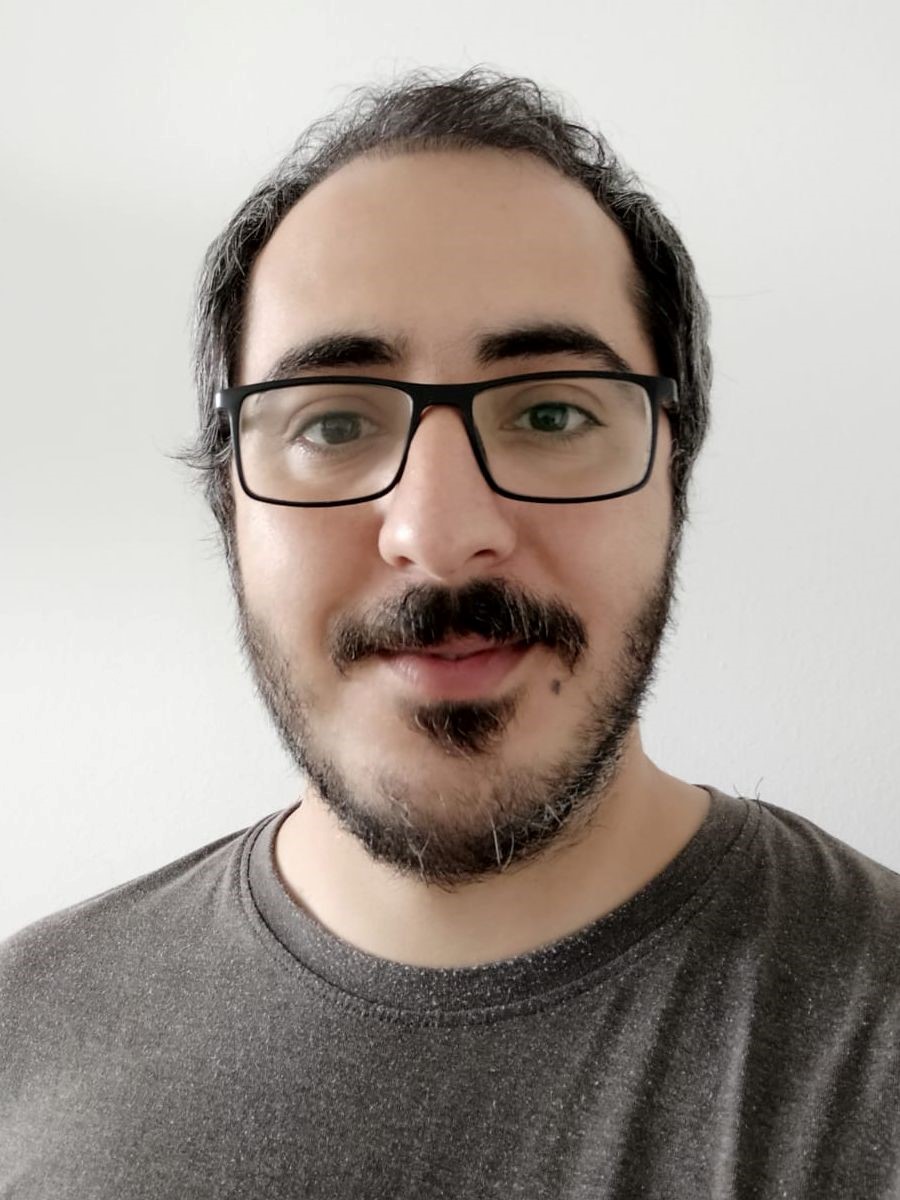}}]{Behnam Razi Perjikolaei}
received his B.Sc. and M.Sc. in computer engineering and computer systems architecture from Shahid Bahonar University of Kerman, Iran in 2007 and from IAU Science and Research branch Tehran, Iran, in 2012, respectively. From 2012 to 2016 he worked in the industrial automation department of ACECR Sharif University branch, Iran. Currently he is pursuing his second M.Sc. degree in Control, Microelectronics and Microsystems in University of Bremen, Germany. His current research interests include Network-on-Chip communication architectures, especially for FPGA and heterogeneous 3D architecture. 
\end{IEEEbiography}
\begin{IEEEbiography}[{\includegraphics[width=1in,height=1.25in,clip,keepaspectratio]{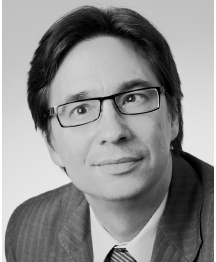}}]{Prof. Alberto Garc\'ia-Ortiz} obtained  the diploma degree in Telecommunication Systems from Universitat Politecnica de Valencia in 1998.  After working for two years at Newlogic in Austria, he started the Ph.D. at the Institute of Microelectronic Systems, Technische Universität Darmstadt, Germany.  In 2003, he received  the  Ph.D.  degree  with  summa cum laude. From 2003 to 2005, he worked as a Senior Hardware Design Engineer at IBM Deutschland Development and Research in Böblingen.  After that he joined AnaFocus in Seville, Spain.  Since 2011, he is full professor for the chair of integrated digital systems at the University of Bremen. Dr. Garcia-Ortiz  received  the  Outstanding dissertation award in 2004 from the European Design and Automation Association.  In 2005, he received from IBM an innovation award for contributions to leakage estimation. He  serves as editor and reviewer of several  conferences, journals, and projects. His interests include low-power design, communication-centric design, SoC integration, and variations-aware design.
\end{IEEEbiography}
\begin{IEEEbiography}[{\includegraphics[width=1in,height=1.25in,clip,keepaspectratio]{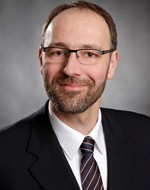}}]{Prof. Thilo Pionteck}
is holding the chair for hardware-oriented computer science at the Otto-von-Guericke-Universität Magdeburg, Germany. He received his Diploma degree in 1999 and his Ph.D. (Dr.-Ing.) degree in Electrical Engineering from the Technische Universität Darmstadt, Germany. In 2008, he was appointed as assistant professor for Integrated Circuits and Systems at the Universität zu Lübeck. From 2012 to 2014, he was substitute of the Chair of Embedded Systems at the Technische Universität Dresden and of the Chair of Computer Engineering at the Technische Universität at Hamburg-Harburg, Germany. In 2015 he was appointed as professor of the Chair of Organic Computing at the Universität zu Lübeck, Germany, with research focus on adaptive digital systems. He was appointed to the Otto-von-Guericke Universität Magdeburg, Germany in 2016. His research work focuses on Network-on-Chips, adaptive system design, runtime reconfiguration, and hardware/software co-design.
\end{IEEEbiography}
%


\end{document}